\documentclass[aps,prl,reprint,superscriptaddress,longbibliography]{revtex4-2}
\usepackage[T1]{fontenc}
\usepackage{amsmath,amssymb,amsthm,mathtools,graphicx,tikz}
\usepackage{microtype}
\usepackage{xcolor}
\usepackage[colorlinks=true,allcolors=blue!50!black]{hyperref}
\hypersetup{pdftitle={Quantum Behaviors Are Not Semialgebraic},
 pdfsubject={Exact geometry of quantum correlations in a fixed Bell scenario},
 pdfkeywords={quantum correlations, Bell nonlocality, semialgebraic geometry, semidefinite programming}}
\newcommand{\Cq}{\mathcal C_q}
\newcommand{\Cqa}{\mathcal C_{qa}}
\newcommand{\Cqc}{\mathcal C_{qc}}
\newcommand{\Ct}{\mathcal C_t}
\newcommand{\id}{I}
\DeclareMathOperator{\Tr}{Tr}
\DeclareMathOperator{\diag}{diag}
\newcommand{\ip}[2]{\langle #1,#2\rangle}
\newcommand{\norm}[1]{\lVert #1\rVert}

\newtheorem{theorem}{Theorem}
\newtheorem{lemma}[theorem]{Lemma}
\newtheorem{proposition}[theorem]{Proposition}
\newtheorem{corollary}[theorem]{Corollary}
\theoremstyle{remark}\newtheorem{remark}[theorem]{Remark}
\newcommand{\HH}{\mathcal H}
\DeclareMathOperator{\spec}{spec}
\DeclareMathOperator{\spanop}{span}
\definecolor{sliceblue}{HTML}{155E86}
\definecolor{slicefill}{HTML}{A7D4E5}

\begin{document}
\title{Quantum Behaviors Are Not Semialgebraic}

\author{Minbo Gao}%
\email{gaomb@ios.ac.cn}%
\affiliation{Key Laboratory of System Software (Chinese Academy of Sciences),
  Institute of Software, Chinese Academy of Sciences, Beijing, P.R. China}%
\affiliation{University of Chinese Academy of Sciences, Beijing, P.R. China}

\author{Zhengfeng Ji}%
\email{jizhengfeng@tsinghua.edu.cn}%
\affiliation{ Department of Computer Science and Technology, Tsinghua
  University, Beijing, P.R. China }%

\author{Chenghua Liu}%
\email{liuch@ios.ac.cn}%
\affiliation{Key Laboratory of System Software (Chinese Academy of Sciences),
  Institute of Software, Chinese Academy of Sciences, Beijing, P.R. China}%
\affiliation{University of Chinese Academy of Sciences, Beijing, P.R. China}

\begin{abstract}
The conditional probabilities achievable by local measurements on a
shared quantum state in a Bell scenario form the set of quantum
behaviors, whose structure was studied by Tsirelson. In 1993, Tsirelson
asked whether this set is semialgebraic, that is, describable by finite
Boolean combinations of polynomial equations and inequalities. This
question has remained open. We answer it in the negative: with four
binary measurements per party, the set of finite-dimensional quantum
behaviors, its closure, and the commuting-operator set are all
nonsemialgebraic. More strongly, none admits a finite real-analytic
description even locally near a particular classical behavior. These
results rule out exact finite semidefinite representations and show
that no finite level of the Navascu\'es--Pironio--Ac\'in hierarchy
characterizes these sets exactly.
\end{abstract}
\maketitle

Can finitely many polynomial equations and inequalities characterize
all quantum behaviors in a Bell experiment? Posed by Tsirelson in
1993~\cite{Tsirelson}, this longstanding question asks for a finite
description of the conditional probability tables achievable by local
measurements on a shared quantum state---the \emph{quantum behaviors}.
Characterizing this set is central to Bell nonlocality and
device-independent information processing, where observable statistics
reveal the capabilities of uncharacterized devices~\cite{Bell,Brunner}.
In a fixed Bell scenario, the local and nonsignaling sets are polytopes,
each defined by finitely many linear inequalities. The quantum set lies
between them but has a richer geometry~\cite{Goh}, making it natural to
ask whether finitely many nonlinear conditions can describe it exactly.

Finite characterizations do exist in important restricted settings.
Binary correlators, with single-party marginals omitted, admit a finite
positive-semidefinite Gram matrix characterization~\cite{Tsirelson87}.
For full behaviors with two binary measurements per party, a reduction
to qubits followed by finite convexification yields a semialgebraic
description~\cite{Masanes}. The minimal correlator body is known
to be semialgebraic but not \emph{basic} semialgebraic~\cite{Le}:
finite unions of polynomial regions can succeed where a single
conjunction of inequalities fails. At any fixed Hilbert-space dimension,
the conditions on states and measurements, together with the Born rule,
can be expressed as polynomial constraints on finitely many matrix
entries. Eliminating these entries preserves
semialgebraicity by the Tarski--Seidenberg theorem~\cite{BCR}.
The question is whether such a finite description survives when the
internal dimension is unbounded but the Bell scenario remains fixed.

The original question also allows analytic inequalities
(Ref.~\cite{Tsirelson}, Problem~2.10). For the semialgebraic question, a negative answer beyond the smallest
scenario was later anticipated by Ozawa~\cite{Ozawa2013}.
Here we consider the full semialgebraic class: finite Boolean
combinations of polynomial conditions with arbitrary real coefficients.
Allowing finitely many auxiliary real variables does not enlarge this
class~\cite{BCR}. For the local analytic question, the defining
functions are required to be real analytic on an open neighborhood
of the behavior being described.

In this Letter, we give an elementary proof that the set of quantum
behaviors is not semialgebraic in the Bell scenario with four binary
measurements per party. We construct an explicit quadratic curve that
intersects the
finite-dimensional quantum behaviors, its closure, and the commuting-operator
quantum behaviors at the same infinite discrete sequence of points.
Interestingly, these points
converge to a point with a simple classical local model.
Every finite semialgebraic outer approximation must instead contain
an entire interval of the curve near this limit, revealing a specific
limitation of
every finite level of the Navascu\'es--Pironio--Ac\'in (NPA)
hierarchy~\cite{NPA2007,NPA}.

Nonclosure~\cite{Slofstra,DPP}, separation of the closed tensor-product
and commuting-operator models~\cite{MIPRE}, and fixed-scenario
membership undecidability~\cite{FMS} establish other limitations on
the characterization of quantum behaviors.
A nonclosed set can still be semialgebraic; undecidability alone does
not exclude descriptions with arbitrary, possibly noncomputable, real
coefficients. Our obstruction is directly geometric and persists under
closure. It applies to finite semialgebraic relaxations regardless of
how they are constructed, including those obtained within the
noncommutative polynomial-optimization framework~\cite{PNA}.

The construction builds on the known structure of synchronous correlations,
for which equal inputs always give equal outputs~\cite{PSSTW}.
Kruglyak, Rabanovich, and Samoilenko classified scalar sums of four
projections~\cite{KRS}. The correlation formulas used here and the
finite-dimensional argument that recovers a scalar projection sum from
their moments already appear in the self-testing work of Man\v{c}inska,
Prakash, and Schafhauser~\cite{MPS}. We use this structure to determine
the full intersection of the curve with each of the three sets
and to derive the local analytic obstruction and its consequences for
finite exact descriptions.

\emph{Models and the quadratic curve.---}
Inputs are $i,j\in\{1,2,3,4\}$ and outputs are $a,b\in\{0,1\}$.
A behavior is the vector $p=(p(a,b\mid i,j))\in\mathbb R^{64}$.
The finite-dimensional quantum set $\Cq$ consists of
\begin{equation}
 p(a,b\mid i,j)=\Tr[\rho(E_i^a\otimes G_j^b)],
 \label{eq:model}
\end{equation}
with arbitrary finite local dimensions, a density operator $\rho$, and
local positive-operator-valued measurements (POVMs). Write
$\Cqa=\overline{\Cq}$ for its Euclidean closure. The set $\Cqc$ consists
of behaviors $\ip{\psi}{A_i^aB_j^b\psi}$ on an arbitrary Hilbert space,
where $\psi$ is a unit vector and $A_i=A_i^1$, $B_j=B_j^1$ are orthogonal
projections satisfying $[A_i,B_j]=0$; outcome-$0$ projections are their
complements. Superscripts label outcomes. Standard dilation and moment-limit
arguments give $\Cq\subseteq\Cqa\subseteq\Cqc$~\cite{NPA}
(see the Supplemental Material~\cite{SM} for details).
Both inclusions are known to be strict in general: the first by
Slofstra's nonclosure theorem~\cite{Slofstra}, and the second by the
$\mathrm{MIP}^*=\mathrm{RE}$ theorem~\cite{MIPRE}.

For a real parameter $\alpha$, set
\begin{equation}
 u=\frac\alpha4,\qquad v=\frac{\alpha(\alpha-1)}{12}.
 \label{eq:uv}
\end{equation}
In output order $(11,10,01,00)$, define the quadratic curve
\begin{align}
 p_\alpha(\cdot\mid i,i)&=(u,0,0,1-u),\nonumber\\
 p_\alpha(\cdot\mid i,j)&=(v,u-v,u-v,1-2u+v)\quad(i\ne j).
 \label{eq:curve}
\end{align}
For $1<\alpha<2$ these are normalized, nonnegative, nonsignaling
behaviors, with outcome-$1$ marginals $u$. They are synchronous:
equal inputs always give equal outputs.

\begin{theorem}[Exact intersection and local obstruction]\label{thm:main}
Write $\alpha_m=2-2/m$ for $m\ge3$. For each
$t\in\{q,qa,qc\}$, in the four-input, binary-output scenario,
\begin{equation}
 \{\alpha\in(1,2):p_\alpha\in\Ct\}
 =\left\{2-\frac2m:m=3,4,\ldots\right\}.
 \label{eq:exact}
\end{equation}
Consequently, $\Cq$, $\Cqa$, and $\Cqc$ are not semialgebraic over
$\mathbb R$. Moreover, none is semianalytic at the classical behavior $p_2$.
\end{theorem}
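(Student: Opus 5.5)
The plan is to reduce membership of $p_\alpha$ in each $\Ct$ to a question about scalar sums of four projections, and then invoke the Kruglyak--Rabanovich--Samoilenko classification~\cite{KRS}. We prove the chain
\[
\{\alpha_m\}\subseteq\{\alpha:p_\alpha\in\Cq\}\subseteq\{\alpha:p_\alpha\in\Cqa\}\subseteq\{\alpha:p_\alpha\in\Cqc\}\subseteq\{\alpha_m\},
\]
all intersected with $(1,2)$. The middle inclusions are $\Cq\subseteq\Cqa\subseteq\Cqc$, so only the two outer inclusions need proof.

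\emph{Upper bound ($\Cqc$).} Since $p_\alpha$ is synchronous, the result of Paulsen et al.~\cite{PSSTW} on synchronous commuting-operator correlations gives projections $A_1,\dots,A_4$ in a unital $C^*$-algebra with a tracial state $\tau$ such that $p_\alpha(1,1\mid i,j)=\tau(A_iA_j)$. In particular $\tau(A_i)=u$ and $\tau(A_iA_j)=v$ for $i\ne j$. Put $P=\sum_i A_i$. Then
\[
\tau(P)=4u=\alpha,\qquad \tau(P^2)=4u+12v=\alpha+\alpha(\alpha-1)=\alpha^2,
\]
so $\tau\bigl((P-\alpha)^2\bigr)=0$. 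Now pass to the GNS representation of $\tau$. There the trace vector is separating, because $\tau(x^*x)=0$ together with traciality gives $\tau(y^*x^*xy)=\tau(xyy^*x^*)\le\norm{y}^2\tau(xx^*)=\norm{y}^2\tau(x^*x)=0$. Hence $\pi(P)=\alpha\id$, with four nonzero projections on a Hilbert space (nonzero because $\tau(A_i)=u>0$). KRS classify the $\alpha$ for which four projections can sum to $\alpha\id$ on an arbitrary Hilbert space. Their set is $\{0,1,2,3,4\}\cup\{1+\tfrac{k}{k+2},\,3-\tfrac{k}{k+2}:k\ge1\}$. Its intersection with $(1,2)$ is exactly $\{2-2/m:m\ge3\}$. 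I expect this step to be the main obstacle: one must quote the synchronous/tracial correspondence precisely, check the faithfulness argument, and confirm that KRS apply in infinite dimension and not merely to matrices.

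\emph{Lower bound ($\Cq$).} For $\alpha=\alpha_m$, KRS provide explicit finite-dimensional projections $Q_1,\dots,Q_4$ on $\mathbb C^n$ with $\sum Q_i=\alpha\id$. Set $A_i=\bigoplus_{\sigma\in S_4}Q_{\sigma(i)}$ on $\mathbb C^{24n}$ and let $\tau$ be the normalized trace. By symmetry all $\tau(A_i)$ coincide, and all $\tau(A_iA_j)$ with $i\ne j$ coincide. Since $\sum A_i=\alpha\id$, we get $\tau(A_i)=\alpha/4=u$. Also $\sum_{i\ne j}\tau(A_iA_j)=\alpha^2-\alpha$, so each equals $v$. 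The maximally entangled state with measurements $A_i$ for Alice and $A_j^{T}$ for Bob then realizes $\tau(A_i^aA_j^b)$ for all outcomes. This gives exactly $p_{\alpha_m}$, and $p_{\alpha_m}\in\Cq$. This proves \eqref{eq:exact}.

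\emph{Consequences.} Suppose some $\Ct$ were semialgebraic. The map $\alpha\mapsto p_\alpha$ is polynomial, so the preimage of $\Ct$ would be a semialgebraic subset of $\mathbb R$, i.e.\ a finite union of points and intervals. But \eqref{eq:exact} says this preimage meets $(1,2)$ in an infinite discrete set accumulating at $2$, a contradiction. For semianalyticity at $p_2$, pull back a local semianalytic description near $p_2$ along the analytic map $\alpha\mapsto p_\alpha$. This yields a set that is semianalytic in $\mathbb R$ near $\alpha=2$. Each defining function is a one-variable analytic function, so it either vanishes identically or has isolated zeros. Hence, on some punctured interval $(2-\epsilon,2)$, every defining function has constant sign. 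The pulled-back set therefore either contains all of $(2-\epsilon,2)$ or misses it entirely. Both options contradict \eqref{eq:exact}, since the set contains infinitely many points $\alpha_m\to2$ but none of the points strictly between consecutive $\alpha_m$.
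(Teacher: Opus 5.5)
Your proof is correct. The sufficiency direction ($S_4$ symmetrization on $\Phi_d$) and both consequences (the one-variable interval-and-point structure, and isolated zeros at $\alpha=2$) are essentially the paper's; the real difference is in necessity. You invoke the theorem of Paulsen et al.~\cite{PSSTW} that synchronous commuting-operator behaviors come from a tracial state, $p(a,b\mid i,j)=\tau(A_i^aA_j^b)$. You then get $\pi(P)=\alpha\id$ from $\tau((P-\alpha)^2)=0$, because the GNS vector of a trace is separating; your traciality estimate for this is right. The paper never passes to a trace. In an arbitrary commuting representation, synchrony gives $A_i\psi=B_i\psi$, hence $\norm{(S-\alpha\id)\psi}=0$. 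The relation $A_iw(B)\psi=w(B)B_i\psi$ then shows that the closed span $K$ of Bob words applied to $\psi$ is $A_i$-invariant and that $S-\alpha\id$ vanishes on it. The mechanism is the same: $K$ is a copy of your GNS space, and the swap $A_i\psi=B_i\psi$ is exactly what makes the vector state tracial in \cite{PSSTW}. The paper's version is self-contained and needs no $C^*$-algebraic input, while yours is shorter modulo the citation and places the step in the standard tracial framework. The paper also handles both points you flag as obstacles without appealing to \cite{KRS}. The restriction $\alpha=2-2/m$ on an arbitrary, possibly infinite-dimensional Hilbert space is proved by a spectral-reflection lemma using approximate eigenvectors. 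Finite-dimensional projections at $\alpha_m$ are written explicitly on $\mathbb C^m$ from the blocks $P_\lambda^\pm$. You simply assert that \cite{KRS} supplies finite-dimensional projections at $\alpha_m$. This is true, as the paper's $m$-dimensional construction confirms, but $\Cq$ membership depends on it, so it needs a justification or an explicit construction. Finally, you never verify that $p_2$ is local; the paper does so with a uniformly random two-element subset of $\{1,2,3,4\}$. Your semianalytic argument does not need this, but the statement asserts it.
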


The exact intersection is illustrated in Fig.~\ref{fig:slice}.
We prove necessity in the largest model and construct every allowed
point in the smallest one. Full operator-theoretic details are provided
in the Supplemental Material~\cite{SM}.

\begin{figure}[t]
 \centering
 \resizebox{\columnwidth}{!}{%
 \begin{tikzpicture}[x=10cm,y=1cm,font=\footnotesize]
  \def\betaval{1.69}
  \fill[slicefill] (\betaval,-0.18) rectangle (2,0.18);
  \foreach \y in {0,1.65} {
   \draw[gray,line width=0.35pt] (1.25,\y) -- (2.025,\y);
   \foreach \m in {3,...,200} {
    \fill[sliceblue] ({2-2/\m},\y) circle[radius=1.0pt];
   }
   \fill[black] (1.995,\y-0.05) rectangle (2.005,\y+0.05);
   \node[anchor=west,inner sep=1pt] at (2.025,\y) {$\alpha$};
  }
  \draw[sliceblue,line width=0.6pt] (\betaval,-0.23) -- (\betaval,0.23);
  \node[anchor=west,inner sep=0pt] at (1.25,2.25)
       {Exact: $\Cq,\ \Cqa,\ \Cqc$};
  \node[anchor=east,inner sep=0pt] at (2,2.25) {$p_2$ local};
  \node[anchor=west,inner sep=0pt] at (1.25,0.60)
       {Any finite NPA level $\mathcal Q_\ell$};
  \foreach \value/\ticklabel in {{4/3}/{4/3},{3/2}/{3/2},{7/4}/{7/4},2/2} {
   \draw[gray,line width=0.35pt] ({\value},1.55) -- ({\value},1.60);
   \node[anchor=north,inner sep=1pt] at ({\value},1.42) {$\ticklabel$};
  }
  \node[anchor=north,inner sep=1pt] at (\betaval,-0.30) {$\beta_\ell$};
  \node[anchor=north,inner sep=1pt] at ({(\betaval+2)/2},-0.66)
       {guaranteed interval};
 \end{tikzpicture}%
 }
 \caption{Exact and relaxed realizability along the quadratic curve
 $p_\alpha$. Top: all three models share the parameters
 $\alpha_m=2-2/m$, accumulating at the classical endpoint $p_2$
 (square). Bottom: every finite NPA level contains a tail interval
 $(\beta_\ell,2)$ (shaded). All other parameters in that interval lie
 outside even $\Cqc$. Only finitely many exact points are drawn;
 $\beta_\ell$ is schematic. Figure drawn with Codex-assisted TikZ code.}
 \label{fig:slice}
\end{figure}

\emph{From probabilities to projections.---}
For $p_\alpha\in\Cqc$, take a commuting representation
$A_i,B_j,\psi$. Following the moment argument of Ref.~\cite{MPS},
the synchrony relation~\cite{PSSTW} gives
\begin{equation}
 \norm{(A_i-B_i)\psi}^2
 =p_\alpha(1,0\mid i,i)+p_\alpha(0,1\mid i,i)=0.
 \label{eq:sync}
\end{equation}
Thus $A_i\psi=B_i\psi$. Let $S=\sum_iA_i$. Replacing $A_j\psi$
by $B_j\psi$ in each second moment gives
\begin{align}
 \norm{(S-\alpha\id)\psi}^2
 &=\sum_{i,j}p_\alpha(1,1\mid i,j)+\alpha^2\nonumber\\
 &\quad-2\alpha\sum_i p_\alpha(1,1\mid i,i)\nonumber\\
 &=\alpha^2-2\alpha^2+\alpha^2=0.
 \label{eq:variance}
\end{align}
This vector identity becomes an operator identity on the nonzero closed
subspace
\begin{equation}
 K=\overline{\operatorname{span}}\{w(B_1,\ldots,B_4)\psi:
                                  w\text{ a word}\},
 \label{eq:cyclic}
\end{equation}
including the empty word. The relation
\[
 A_iw(B)\psi=w(B)B_i\psi\in K
\]
shows that the restrictions $P_i=A_i|_K$
are orthogonal projections. Since $S$ commutes with every Bob word,
Eq.~\eqref{eq:variance} implies that $S-\alpha\id$ vanishes on the
dense span in Eq.~\eqref{eq:cyclic}. By continuity, it vanishes on all
of $K$, giving
\begin{equation}
 P_1+P_2+P_3+P_4=\alpha\id_K.
 \label{eq:scalar}
\end{equation}
No faithful state or finite-dimensional representation is required.
See Lemma~\ref{supp:lem:scalar} in the Supplemental Material~\cite{SM}
for details.

The scalar values allowed by Eq.~\eqref{eq:scalar} are given by the
following known classification, stated here in our notation.

\emph{KRS classification (Ref.~\cite{KRS}, Proposition~3(a)).}
There exist four orthogonal projections $P_1,\ldots,P_4$ on a nonzero
complex Hilbert space $K$ satisfying $\sum_{i=1}^4P_i=\alpha\id_K$
if and only if
\begin{equation}
 \alpha\in\Sigma_4:=\{0,1,2,3,4\}\cup
 \left\{2\pm\frac{2}{m}:m=3,4,\ldots\right\}.
 \label{eq:krs-classification}
\end{equation}
Here each $P_i$ is selfadjoint and idempotent; the projections need not
be mutually orthogonal.

In particular, for $1<\alpha<2$, Eq.~\eqref{eq:scalar} can hold only when
$\alpha=2-2/m$ for an integer $m\ge3$.
We include a self-contained proof of this fact in the
Supplemental Material~\cite{SM}.
Very briefly, if $A=P+Q$ and $D=P-Q$, then $D^2=A(2\id-A)$ and $AD=D(2\id-A)$;
hence spectral points of $A$ in $(0,2)$ occur in pairs $\lambda,2-\lambda$.
Applying this reflection to $P_1+P_2$ and $P_3+P_4=\alpha\id-P_1-P_2$ gives a
descent by $4-2\alpha$.
Positivity of the spectrum forces that descent to terminate, quantizing $\alpha$ to the stated
sequence.
This proves the necessary inclusion in Eq.~\eqref{eq:exact}.

\emph{Realizing every allowed parameter.---}
Let $m\ge3$ and $\alpha=2-2/m$. On $\mathbb C^m$, put
\begin{equation}
 A=\diag\!\left(\frac{2j}{m}:0\le j\le m-1\right),
 \qquad B=\alpha\id-A.
 \label{eq:diagonal}
\end{equation}
The diagonal of $B$ is the reverse of that of $A$.
Apart from a single $0$ and, for even $m$, a single $1$, the entries
pair as $\lambda,2-\lambda$.
Each such pair is a sum of the two projections
\begin{equation}
 P_\lambda^\pm=\frac12
 \begin{pmatrix}
 \lambda&\pm\sqrt{\lambda(2-\lambda)}\\
 \pm\sqrt{\lambda(2-\lambda)}&2-\lambda
 \end{pmatrix}.
 \label{eq:blocks}
\end{equation}
The single entries decompose as $0=0+0$ and $1=1+0$.
Thus $A=P_1+P_2$ and $B=P_3+P_4$ for four real projections
with sum $\alpha\id$.

Symmetrize the labels by taking
$Q_i=\bigoplus_{\pi\in S_4}P_{\pi(i)}$ on dimension $d=24m$.
For the normalized trace $\tau=\Tr/d$, permutation symmetry and
$\sum_iQ_i=\alpha\id$ imply
\begin{equation}
 \tau(Q_i)=\frac\alpha4,\qquad
 \tau(Q_iQ_j)=\frac{\alpha(\alpha-1)}{12}\quad(i\ne j).
 \label{eq:traces}
\end{equation}
The second identity follows by expanding $\tau[(\sum_iQ_i)^2]$.
On the maximally entangled state
$\Phi_d=d^{-1/2}\sum_{r=1}^d e_r\otimes e_r$, use $Q_i$ for
Alice's outcome $1$ and $Q_j^{\mathsf T}$ for Bob's. The identity
\begin{equation}
 \ip{\Phi_d}{X\otimes Y^{\mathsf T}\Phi_d}=\tau(XY)
 \label{eq:ME}
\end{equation}
reproduces Eq.~\eqref{eq:curve}, including the marginals and the
zero mixed-output probabilities at equal inputs. Hence every
$p_{2-2/m}$ belongs to $\Cq$. Together with
$\Cq\subseteq\Cqa\subseteq\Cqc$, this proves Eq.~\eqref{eq:exact}.

If any of the three correlation sets were semialgebraic, substituting
the polynomial curve into its finite Boolean description would make
the parameter set in Eq.~\eqref{eq:exact} semialgebraic in one variable.
A finite collection of nonzero univariate polynomials has only finitely
many roots, and its signs are constant between consecutive roots.
Every semialgebraic subset of the line is therefore a finite union
of intervals and points. It cannot be infinite and countable.
This contradiction proves the nonsemialgebraicity assertion of Theorem~\ref{thm:main},
without any restriction on the polynomial coefficients.

\emph{An analytic obstruction at a classical behavior.---}
The endpoint $p_2$ is classical: the parties share a uniformly random
two-element subset of $\{1,2,3,4\}$ and output $1$ exactly when their
input belongs to it. The resulting marginals are $1/2$ and the
distinct-input $(1,1)$ probabilities are $1/6$, as in Eq.~\eqref{eq:curve}.
Nevertheless, none of $\Cq$, $\Cqa$, or $\Cqc$ is semianalytic at $p_2$.
Indeed, suppose membership in a neighborhood of $p_2$ were a finite
Boolean combination of signs of real analytic functions. Composing
them with $p_\alpha$ gives finitely many functions analytic near
$\alpha=2$. Each is either identically zero near $2$ or has no zeros
in a sufficiently small punctured neighborhood. Their signs, and
therefore membership, are constant on some interval $(2-\epsilon,2)$.
Equation~\eqref{eq:exact} contradicts this. This statement requires
analyticity at the limit point; it makes no claim about functions
singular there or arbitrary projections of analytic sets.

\emph{What the obstruction means physically.---}
The accumulation point is classical, but every realizable $p_\alpha$
with $1<\alpha<2$ is Bell nonlocal. In a synchronous local model the
outputs are shared predetermined bits $z_1,\ldots,z_4$. The same zero
variance as in Eq.~\eqref{eq:variance} would require
$\sum_i z_i=\alpha$ with probability one, impossible for noninteger
$\alpha$. Nonlocal quantum points therefore approach a behavior with
an elementary local explanation.

The obstruction is invisible if only the binary correlators are kept.
Writing $C_{ij}=\sum_{a,b}(-1)^{a+b}p_\alpha(a,b\mid i,j)$ gives
\begin{equation}
 C_{ii}=1,\qquad C_{ij}=c_\alpha:=\frac{(\alpha-2)^2-1}{3}\quad(i\ne j).
 \label{eq:correlators}
\end{equation}
For every $1<\alpha<2$, these correlators themselves admit a classical
model. Mix uniformly random balanced four-sign strings with uniformly
random unrestricted four-sign strings, using weights $-3c_\alpha$ and
$1+3c_\alpha$; both parties output the sign at their input. This
reproduces Eq.~\eqref{eq:correlators}, with zero single-party means.
The full table additionally fixes those means to $1-\alpha/2$.
Their compatibility with the correlators is what restricts $\alpha$
to a discrete sequence. Thus the finite characterizations available
for correlators alone~\cite{Tsirelson87,Le} cannot characterize the
full quantum behavior.

The same projection identity makes the dimension cost explicit.
Let $D_m^{\mathrm{sr}}$ be the smallest bound on both local quantum
dimensions for an exact realization of $p_{\alpha_m}$, with shared
classical randomness free. Then
\begin{equation}
 \frac{m}{\gcd(m,2)}\le D_m^{\mathrm{sr}}\le m.
 \label{eq:dimension}
\end{equation}
For the lower bound, synchrony turns the POVMs into projections on
each pure component's Schmidt supports. Zero variance forces their
sum to be $\alpha_m I$; taking the trace makes the Schmidt rank a
multiple of the denominator of $\alpha_m$. The upper bound implements
the label symmetrization using shared randomness on dimension $m$.
Thus the minimum dimension for exact realization grows as
$\Theta((2-\alpha_m)^{-1})$, although the limiting behavior requires
only shared randomness. This elementary rank argument places
established dimension-sensitive phenomena~\cite{NV,MPS} at the
classical accumulation point. For odd $m$, the same behaviors already
robustly self-test an $m$-dimensional maximally entangled state~\cite{MPS}.
The proof for all $m$, including mixed states and arbitrary shared
randomness, is given in the Supplemental Material~\cite{SM}.

There is also a sharp comparison with smaller measurement scenarios.
Russell's results give a uniformly bounded-dimensional realization of
the synchronous binary sector with three inputs and identify its
finite-dimensional and commuting models~\cite{RussellGeom,RussellConnes}.
That sector is semialgebraic. Theorem~\ref{thm:main} therefore places
the onset of nonsemialgebraicity at exactly four inputs
\emph{within the synchronous binary sector}
(see the Supplemental Material~\cite{SM}).

\emph{Consequences for exact descriptions and relaxations.---}
First, none of the three models admits a finite semialgebraic lift:
there are no finite $r$ and semialgebraic set $Z\subset\mathbb R^{64+r}$
such that
\begin{equation}
 \Ct=\{p:\text{there exists }z\in\mathbb R^r\text{ with }(p,z)\in Z\}.
 \label{eq:lift}
\end{equation}
The Tarski--Seidenberg theorem~\cite{BCR} would make this projection semialgebraic.
In particular, no fixed finite collection of linear matrix inequalities
with auxiliary variables represents $\Ct$ exactly: matrix positivity
is equivalent to finitely many principal-minor inequalities. Allowing
arbitrary real matrix entries does not remove the obstruction.

Second, the curve exposes how every finite semialgebraic outer approximation
must fail, independently of its construction.

\begin{corollary}\label{cor:tail}
If $R\subset\mathbb R^{64}$ is semialgebraic and $\Cq\subseteq R$,
there is $\beta\in(1,2)$ such that $p_\alpha\in R$ for all
$\alpha\in(\beta,2)$.
\end{corollary}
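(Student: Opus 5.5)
The plan is to pull $R$ back along the curve and use one-variable semialgebraic structure. Set $T=\{\alpha\in\mathbb R: p_\alpha\in R\}$. The map $\alpha\mapsto p_\alpha$ is polynomial by Eqs.~\eqref{eq:uv} and \eqref{eq:curve}. Substituting it into a finite Boolean description of $R$ therefore gives a finite Boolean combination of sign conditions on univariate polynomials, so $T$ is semialgebraic in $\mathbb R$. As in the nonsemialgebraicity argument after Theorem~\ref{thm:main}, the finitely many nonzero polynomials involved have finitely many roots, and their signs are constant between consecutive roots. Hence $T$ is a finite union of points and open intervals. Concretely, there is a finite set $F\subset\mathbb R$ such that on each connected component of $\mathbb R\setminus F$ either every point lies in $T$ or no point does.

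Next I use the inclusion $\Cq\subseteq R$ together with the realization part of Theorem~\ref{thm:main}: every $p_{\alpha_m}$ with $\alpha_m=2-2/m$, $m\ge3$, lies in $\Cq$, hence in $R$. So $\alpha_m\in T$ for all $m\ge3$. Choose $\beta\in(1,2)$ larger than every element of $F\cap(-\infty,2)$; this is possible because $F$ is finite. Then $(\beta,2)$ is contained in a single component of $\mathbb R\setminus F$. Since $\alpha_m\to2$ from below, some $\alpha_m$ lies in $(\beta,2)$, and that point is in $T$. By constancy on the component, all of $(\beta,2)$ lies in $T$, that is, $p_\alpha\in R$ for every $\alpha\in(\beta,2)$.

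There is no real obstacle. The only point needing care is the cell decomposition of the line: the finite Boolean combination has to be organized so that membership is constant on each open interval between consecutive roots. This is the same elementary fact already invoked in the proof of Theorem~\ref{thm:main}, and it needs no assumption on the coefficients of $R$'s description. Only the realizability direction of Eq.~\eqref{eq:exact} (the construction in $\Cq$) is used, not the necessity direction.
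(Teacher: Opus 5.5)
Your proof is correct and follows essentially the same argument as the paper. Like the paper, you pull $R$ back along the polynomial curve to a semialgebraic subset of the line, choose $\beta$ beyond the finitely many roots below $2$, and use the fact that $(\beta,2)$ contains some $2-2/m$ with $p_{2-2/m}\in\Cq\subseteq R$ to force the constant membership value to be true; your remark that only the realizability direction of Eq.~\eqref{eq:exact} is needed is also accurate.
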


To see this, the inverse image
$\{\alpha\in(1,2):p_\alpha\in R\}$ is semialgebraic and contains
the sequence $2-2/m\to2$. Its finite interval-and-point decomposition
must therefore contain a whole interval ending at $2$.
Theorem~\ref{thm:main} shows that every parameter in this interval outside the
discrete sequence produces a behavior outside even $\Cqc$.

Each fixed NPA level $\mathcal Q_\ell$ is the projection of a finite
semidefinite feasibility problem and contains $\Cqc$~\cite{NPA}.
Corollary~\ref{cor:tail} gives
\begin{equation}
 \exists\beta_\ell<2\quad
 p_\alpha\in\mathcal Q_\ell\quad
 \text{for every }\alpha\in(\beta_\ell,2).
 \label{eq:tail}
\end{equation}
No finite NPA level can therefore equal either $\Cqa$ or $\Cqc$
in this scenario. Completeness guarantees eventual exclusion of every
fixed behavior outside $\Cqc$, but there is no uniform finite level
that resolves all the gaps near $\alpha=2$.

The same conclusion applies to other finite semialgebraic outer
approximations. For example, the almost-quantum set has a finite
semidefinite description~\cite{AlmostQuantum} and must therefore
contain an interval of the same kind. This constrains proposed physical
principles whenever their characterization reduces to finitely many
polynomial conditions and auxiliary variables; principles involving
unlimited composition or limiting procedures are outside this scope.

On this curve, exact membership for rational $1<\alpha<2$ is decided
simply by checking whether $2/(2-\alpha)$ is an integer at least three.
Thus an elementary membership rule can coexist with an obstruction to
every finite polynomial description. The obstruction transfers to all
larger Bell scenarios containing the four-input, binary-output case
by duplicating measurements and adding unused outcomes
(see the Supplemental Material~\cite{SM}).

These statements concern exact membership. Synchrony places the curve
on the nonsignaling boundary, and the spacing
$\alpha_{m+1}-\alpha_m=2/[m(m+1)]$ vanishes at the accumulation point.
Quantifying distances of forbidden behaviors from the quantum set,
the NPA order needed to exclude them, and the effect of imperfect
synchrony would determine how this structure appears at finite
precision. The exact result already establishes that a classical
limiting behavior need not have a quantum neighborhood describable
by finitely many analytic conditions.

\paragraph*{Data availability.}
This work contains analytical results. All constructions and proofs
supporting the conclusions are provided in the article and Supplemental
Material; no empirical data were generated or analyzed.

\begin{acknowledgments}
OpenAI Codex was used to assist with literature searches, proof
development and mathematical checks, manuscript preparation, and
TikZ code for the schematic figure.
\end{acknowledgments}

\onecolumngrid
\clearpage
\setcounter{secnumdepth}{1}
\setcounter{section}{0}
\setcounter{equation}{0}
\setcounter{theorem}{0}
\renewcommand{\thesection}{S\arabic{section}}
\renewcommand{\theequation}{S\arabic{equation}}
\renewcommand{\thetheorem}{S\arabic{theorem}}
\renewcommand{\theHsection}{supp.\arabic{section}}
\renewcommand{\theHequation}{supp.\arabic{equation}}
\renewcommand{\theHtheorem}{supp.\arabic{theorem}}
\allowdisplaybreaks[1]
\setlength{\emergencystretch}{2em}
\phantomsection
\label{supplement-start}
\begin{center}
 {\large\bfseries Supplemental Material}\par\medskip
 {\bfseries Quantum Behaviors Are Not Semialgebraic}
\end{center}
\medskip
This supplement provides complete proofs of the spectral restriction,
the construction at every allowed parameter, the passage to limiting
correlations, and the geometric consequences stated in the Letter.
The scalar classification is due to Ref.~\cite{KRS}; the correlation
formulas and finite-dimensional moment-to-scalar argument have precedents
in Ref.~\cite{MPS}. The derivations here make the geometric argument
self-contained. Proof development and checking were assisted by OpenAI
Codex, as disclosed in the Acknowledgments; the complete arguments are
given below.

\section{Models and notation}
The inputs are $i,j\in\{1,2,3,4\}$ and the outputs $a,b\in\{0,1\}$.
Write $\Cq=\Cq(4,4;2,2)$ for all behaviors
\begin{equation}\label{supp:eq:tensor-model}
 p(a,b\mid i,j)=\Tr[\rho(E_i^a\otimes G_j^b)]
\end{equation}
on arbitrary finite-dimensional local complex Hilbert spaces. Here
$\rho$ is a density operator, $E_i^a,G_j^b\succeq0$, and
$\sum_aE_i^a=\sum_bG_j^b=\id$. Define
$\Cqa=\overline{\Cq}$ in the Euclidean space $\mathbb R^{64}$.
The set $\Cqc$ consists of behaviors
$p(a,b\mid i,j)=\ip{\psi}{A_i^aB_j^b\psi}$ on an arbitrary complex
Hilbert space, with a unit vector $\psi$ and orthogonal projections
$A_i,B_j$ commuting across parties. Set $A_i^1=A_i$, $A_i^0=\id-A_i$
and likewise for $B_j$. Superscripts label outcomes, not powers.
Lemma~\ref{supp:lem:limit} proves $\Cq\subseteq\Cqa\subseteq\Cqc$ without
assuming equality of these models.

Inner products are conjugate-linear in the first argument. An orthogonal
projection is a selfadjoint idempotent; distinct projections need not be
mutually orthogonal. Spectra are those of bounded operators.
All references to semialgebraicity allow arbitrary real coefficients and
arbitrary finite Boolean combinations of polynomial conditions.

\section{Scalar sums of four projections}
\label{supp:sec:projections}
\begin{lemma}[Spectral reflection]\label{supp:lem:reflection}
If $P,Q$ are orthogonal projections on a nonzero complex Hilbert space
and $\lambda\in\spec(P+Q)\cap(0,2)$, then
$2-\lambda\in\spec(P+Q)$.
\end{lemma}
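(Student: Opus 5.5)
Set $A=P+Q$ and $D=P-Q$, two bounded selfadjoint operators with $0\le A\le 2\id$. A direct expansion using $P^2=P$ and $Q^2=Q$ gives
\[
D^2=P+Q-PQ-QP=A(2\id-A),\qquad AD=P-PQ+QP-Q=D(2\id-A),
\]
the two identities quoted in the Letter. The plan is to use the intertwining relation $AD=D(2\id-A)$ to transport approximate eigenvectors of $A$ at $\lambda$ to approximate eigenvectors at $2-\lambda$. The relation $D^2=A(2\id-A)$ will ensure that this transport does not collapse the vectors when $\lambda\in(0,2)$.

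Since $A$ is selfadjoint, $\lambda\in\spec(A)$ means $\lambda$ lies in the approximate point spectrum. So there are unit vectors $x_n$ with $\norm{(A-\lambda\id)x_n}\to0$. Put $y_n=Dx_n$. Then
\[
\norm{y_n}^2=\ip{x_n}{D^2x_n}=\ip{x_n}{A(2\id-A)x_n}\to\lambda(2-\lambda)>0,
\]
because $A(2\id-A)x_n-\lambda(2-\lambda)x_n\to0$. Hence $\norm{y_n}$ is bounded below for large $n$. Next,
\[
(A-(2-\lambda)\id)y_n=ADx_n-(2-\lambda)Dx_n=D(2\id-A)x_n-(2-\lambda)Dx_n=-D(A-\lambda\id)x_n,
\]
and its norm is at most $\norm{D}\,\norm{(A-\lambda\id)x_n}\to0$, with $\norm{D}\le 1$. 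Normalizing $y_n$ yields approximate eigenvectors of $A$ for $2-\lambda$, so $2-\lambda\in\spec(A)$.

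The only point that needs care is the non-vanishing of $y_n$. This is exactly where the hypothesis $\lambda\in(0,2)$ enters: at $\lambda\in\{0,2\}$ the vector $Dx_n$ may tend to zero, and the reflection can fail there. The rest is routine algebra. The argument also avoids any finite-dimensional or eigenvalue assumption, since it works with the approximate point spectrum of a selfadjoint operator.
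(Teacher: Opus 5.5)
Your proposal is correct and follows the paper's proof essentially step for step. It uses the same identities $D^2=A(2\id-A)$ and $AD=D(2\id-A)$, transports approximate eigenvectors by $x_n\mapsto Dx_n$, and uses $\norm{Dx_n}^2\to\lambda(2-\lambda)>0$ to stop those vectors from collapsing. The only difference is that the paper includes a short proof that every spectral point of a bounded selfadjoint operator has unit approximate eigenvectors, whereas you cite this as a standard fact.
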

\begin{proof}
Set $A=P+Q$ and $D=P-Q$. Direct multiplication gives
\begin{equation}\label{supp:eq:reflection-identities}
 D^2=A(2\id-A),\qquad AD=D(2\id-A).
\end{equation}
Every spectral point $\lambda$ of a bounded selfadjoint operator $A$
has unit approximate eigenvectors $v_n$ satisfying
$\norm{(A-\lambda\id)v_n}\to0$.
Indeed, if $\norm{(A-\lambda\id)v}\ge c\norm v$ for some $c>0$
and all $v$, then $A-\lambda\id$ has zero kernel and closed range.
Its range is dense because its orthogonal complement is
$\ker(A-\lambda\id)^*=\ker(A-\lambda\id)=\{0\}$.
Thus it is bijective with bounded inverse, a contradiction.

For these approximate eigenvectors, \eqref{supp:eq:reflection-identities} gives
\[
 \norm{Dv_n}^2=\ip{v_n}{A(2\id-A)v_n}
             \longrightarrow\lambda(2-\lambda)>0
\]
and
\[
 (A-(2-\lambda)\id)Dv_n=D(\lambda\id-A)v_n\longrightarrow0.
\]
Consequently $Dv_n/\norm{Dv_n}$ are unit approximate eigenvectors
at $2-\lambda$ for all sufficiently large $n$, proving the claim.
\end{proof}

\begin{proposition}[A dimension-independent restriction]
\label{supp:prop:discrete}
If four orthogonal projections on a nonzero complex Hilbert space satisfy
$\sum_{i=1}^4P_i=\alpha\id$ with $1<\alpha<2$, then
$\alpha=2-2/m$ for an integer $m\ge3$.
\end{proposition}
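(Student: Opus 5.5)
We use Lemma~\ref{supp:lem:reflection} twice and run a descent on the spectrum. Put $A=P_1+P_2$ and $B=P_3+P_4=\alpha\id-A$. Both are positive operators of norm at most $2$, so $\spec(A)\subseteq[0,2]$ and $\spec(B)\subseteq[0,2]$. Since $\spec(B)=\alpha-\spec(A)$, this gives
\[
\spec(A)\subseteq[\alpha-2,\alpha]\cap[0,2]=[0,\alpha].
\]
Because $K\neq0$, the compact set $\spec(A)$ is nonempty.

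Next we combine the two reflections. Lemma~\ref{supp:lem:reflection} applied to $A$ sends $\lambda\in\spec(A)\cap(0,2)$ to $2-\lambda\in\spec(A)$. Applied to $B$, it sends $\mu=\alpha-\lambda\in\spec(B)\cap(0,2)$ to $2-\mu\in\spec(B)$. In terms of $A$, the second reflection is $\lambda\mapsto\alpha-(2-(\alpha-\lambda))=\lambda+2\alpha-2$, valid whenever $\alpha-\lambda\in(0,2)$, that is, whenever $\lambda\in(\alpha-2,\alpha)$. Composing the two gives the map
\[
\lambda\mapsto 2-\lambda\mapsto 2-\lambda+2\alpha-2-\ldots
\]
To track it cleanly, let $r(\lambda)=2-\lambda$ (valid for $\lambda\in(0,2)$) and $s(\lambda)=2\alpha-2-\lambda$ (valid for $\lambda\in(\alpha-2,\alpha)$, i.e.\ for $\lambda<\alpha$ given $\lambda\ge0$). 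Then $r\circ s(\lambda)=\lambda+4-2\alpha$ and $s\circ r(\lambda)=\lambda-(4-2\alpha)$. The step $\delta:=4-2\alpha$ lies in $(0,2)$.

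The descent runs as follows. First, $\spec(A)$ contains a point of $(0,\alpha)$. If $\spec(A)\subseteq\{0,\alpha\}$, then $0\in\spec(A)$ gives $\alpha\in\spec(B)$, with $\alpha\in(1,2)$, so $s$ applies to the point $0$ and yields $2\alpha-2\in(0,2)$. If instead $\alpha\in\spec(A)$, then $r$ yields $2-\alpha\in(0,1)$. Either way there is a point in $(0,\alpha)$. Now start from any $\lambda\in\spec(A)\cap(0,\alpha)$ and repeatedly apply $s\circ r$. At each step the hypotheses must be checked: $\lambda\in(0,2)$ for $r$, and $2-\lambda<\alpha$ (i.e.\ $\lambda>2-\alpha$) for the application of $s$. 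Decreasing by $\delta$ each time, the orbit must leave $[0,\alpha]$ after finitely many steps. Since the spectrum is confined to $[0,\alpha]$, the only way to stop legitimately is to land on a boundary case where a lemma no longer applies. These cases are $\lambda\in\{0,2\}$, where $r$ is inapplicable, or $2-\lambda\in\{\alpha-2,\alpha\}$, where $s$ is inapplicable, i.e.\ $\lambda=2-\alpha$. Similarly, ascending with $r\circ s$ must stop at $\lambda=\alpha$ or at a value with $s(\lambda)\in\{0,2\}$.

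So every orbit under the group generated by the two reflections is a finite arithmetic progression with step $\delta$, and both of its ends lie in the small set of terminal values $\{0,\,2-\alpha,\,2\alpha-2,\,\alpha\}$. The main obstacle is this bookkeeping of terminal cases: the strict inequalities in the lemma must be applied at each step, and one must show which combinations of endpoints are consistent.

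I expect the consistent combinations to be the following. A chain running from $0$ up to $\alpha$ in steps of $\delta/2$, interleaving the $r$- and $s$-images, forces $\alpha=k\delta/2=k(2-\alpha)$. Alternatively, the chain ends at $2-\alpha$ and $2\alpha-2$ by symmetry. Each case yields a relation $\alpha=2-2/m$: for instance $\alpha=k(2-\alpha)$ gives $\alpha=2k/(k+1)=2-2/(k+1)$, so $m=k+1$. In the other configuration one gets $2\alpha-2=2-\alpha+j(2-\alpha)$, which again gives $\alpha=2-2/m$. The constraint $m\ge3$ then follows from $\alpha>1$. As a check, the explicit diagonal example in Eq.~\eqref{eq:diagonal}, with spectrum $\{2j/m\}$, should match the chain found. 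The write-up would fix one $\lambda$, list the full orbit $\lambda+\tfrac{\delta}{2}\mathbb Z$ intersected with the spectrum, and carry out the endpoint case analysis explicitly.
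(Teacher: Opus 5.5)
Your descent step is the paper's. Composing the reflection on $\spec(A)$ with the reflection transported from $\spec(B)=\alpha-\spec(A)$ gives $\lambda\mapsto\lambda-\delta$ with $\delta=4-2\alpha$. Because $\spec(A)\subseteq[0,\alpha]$, the descent can halt only at $0$ or at $2-\alpha=\delta/2$.

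The gap is the closing step. You leave it as an expectation and frame it as deciding ``which combinations of endpoints are consistent.'' Read as chains generated by the two reflections, the two configurations you list (from $0$ to $\alpha$, and from $2-\alpha$ to $2\alpha-2$) occur only for odd $m$. For even $m$ the chains are exactly the two mixed pairings you omit. Take $m=4$: then $\alpha=\tfrac32$, $\delta=1$ and $\spec(A)=\{0,\tfrac12,1,\tfrac32\}$. The orbits of $0$ and $\tfrac12$ under the two reflections are $\{0,1\}$, ending at $2\alpha-2=1$, and $\{\tfrac12,\tfrac32\}$, running from $2-\alpha$ to $\alpha$. They do not interleave into one progression of step $\delta/2$. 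So a proof that tried to rule out the mixed pairings would fail, and your derivation of $\alpha=k(2-\alpha)$ from an interleaved chain is not justified as stated. Relatedly, an orbit of the group generated by both reflections is in general a union of two step-$\delta$ progressions that mirror each other under $\lambda\mapsto2-\lambda$, not a single progression.

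In fact no pairing needs to be excluded:
\begin{itemize}
\item Write the lower terminals as $i\delta/2$ with $i\in\{0,1\}$.
\item Write the upper terminals, $2\alpha-2=2-\delta$ and $\alpha=2-\delta/2$, as $2-j\delta/2$ with $j\in\{1,2\}$.
\item A chain of $n$ steps gives $(2-j\delta/2)-i\delta/2=n\delta$, that is, $2=m\delta/2$ with $m=2n+i+j$.
\item Hence $\alpha=2-\delta/2=2-2/m$, and $\alpha>1$ forces $m\ge3$.
\end{itemize}

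The paper closes more economically, with no ascent at all. The descent alone gives $\spec(A)\subseteq\{j\delta/2:j\in\mathbb N_0\}$. Take any positive $\lambda\in\spec(A)$, whose existence you also establish. Both $\lambda$ and $2-\lambda$ lie in this lattice, so their sum $2$ is a positive integer multiple of $\delta/2$.
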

\begin{proof}
Put
\[
 A=P_1+P_2,\quad B=P_3+P_4=\alpha\id-A,\quad
 S=\spec(A),\quad \delta=4-2\alpha>0.
\]
Positivity and spectral mapping give
\begin{equation}\label{supp:eq:spectra}
 S\subseteq[0,\alpha],\qquad \spec(B)=\alpha-S.
\end{equation}
We first show
\begin{equation}\label{supp:eq:descent}
 \lambda\in S\setminus\{0,\delta/2\}\quad\Longrightarrow\quad
 \lambda-\delta\in S.
\end{equation}
Such a $\lambda$ lies in $(0,2)$, so
$r=2-\lambda\in S$ by Lemma~\ref{supp:lem:reflection}.
The equality $r=\alpha$ would give $\lambda=\delta/2$, which is excluded.
Hence $0<r<\alpha$ and
$\mu=\alpha-r\in\spec(B)\cap(0,2)$.
Reflecting the spectrum of $B$ gives $2-\mu\in\spec(B)$, and then
\[
 \alpha-(2-\mu)=\lambda-(4-2\alpha)=\lambda-\delta\in S.
\]
Starting at any $\lambda\in S$, apply~\eqref{supp:eq:descent} until reaching
$0$ or $\delta/2$. If neither were reached, $\lambda-j\delta$ would
belong to $S$ for every nonnegative integer $j$, contradicting
$S\subseteq[0,\infty)$.
The process therefore stops after finitely many steps, and
\begin{equation}\label{supp:eq:lattice}
 S\subseteq\{j\delta/2:j\in\mathbb N_0\}.
\end{equation}
There is a positive point of $S$. Otherwise the spectral theorem gives
$A=0$, so $B=\alpha\id$. Lemma~\ref{supp:lem:reflection} would force
$2-\alpha\in\spec(B)=\{\alpha\}$, contrary to $\alpha>1$.
Choose a positive $\lambda\in S$. Both $\lambda$ and $2-\lambda$ are
positive points of $S$, since $\lambda\le\alpha<2$.
By~\eqref{supp:eq:lattice}, their sum is $2=m\delta/2$ for a positive
integer $m$. Thus $\delta=4/m$ and $\alpha=2-2/m$.
Finally, $\alpha>1$ implies $m\ge3$.
\end{proof}

\begin{proposition}[All discrete parameters]\label{supp:prop:construction}
For every integer $m\ge3$ there are four real symmetric orthogonal
projections on $\mathbb C^m$ whose sum is $(2-2/m)\id$.
\end{proposition}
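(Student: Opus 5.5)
The plan is to follow the construction already sketched in the Letter, Eqs.~\eqref{eq:diagonal}--\eqref{eq:blocks}, and check each piece carefully. Fix $m\ge3$, $\alpha=2-2/m$, and on $\mathbb C^m$ take the diagonal operators $A=\diag(2j/m:0\le j\le m-1)$ and $B=\alpha\id-A$. The $j$-th diagonal entry of $B$ is $2-2/m-2j/m=2(m-1-j)/m$, so $B$ has the same multiset of entries as $A$, listed in reverse order. We then regroup coordinates so that $A$ and $B$ become direct sums of small blocks, each of which is a sum of two projections.

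\textbf{Step 1 (pairing for $A$).} Pair the index $j$ with $m-j$ for $1\le j\le m-1$, so that the entries $\lambda=2j/m$ and $2-\lambda$ are paired. The entry $0$ at $j=0$ is unpaired, and when $m$ is even so is the entry $1$ at $j=m/2$. On each two-dimensional pair block, spanned by the coordinates with entries $\lambda$ and $2-\lambda$ where $0<\lambda<2$, $\lambda\neq1$, we have
\[
P_\lambda^{+}+P_\lambda^{-}=\diag(\lambda,2-\lambda).
\]
Each $P_\lambda^\pm$ is real symmetric, and it is a rank-one projection: its trace is $1$ and its determinant is $\tfrac14[\lambda(2-\lambda)-\lambda(2-\lambda)]=0$. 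This is a routine check. On the unpaired coordinates, set $P_1=P_2=0$ at the $0$ entry, and at the $1$ entry take $P_1=1$, $P_2=0$. Direct-summing these blocks gives real symmetric projections $P_1,P_2$ in the standard basis with $P_1+P_2=A$, since permuting coordinates into blocks is a real permutation and preserves symmetry.

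\textbf{Step 2 (pairing for $B$).} Do the same for $B$, whose entries are $2(m-1-j)/m$. Its pairing pairs index $j$ with index $j'$ satisfying $(m-1-j)+(m-1-j')=m$. The unpaired entry $0$ sits at $j=m-1$, and for even $m$ the unpaired entry $1$ sits at $j=m/2-1$. This gives real symmetric projections $P_3,P_4$ with $P_3+P_4=B$. Then
\[
P_1+P_2+P_3+P_4=A+B=\alpha\id.
\]
No compatibility between the two block decompositions is needed, because $A$ and $B$ are each diagonal in the same basis and we only require the sum identity.

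\textbf{Main obstacle.} The only real care is in the index bookkeeping. We must verify that the pairing $\lambda\leftrightarrow2-\lambda$ exhausts all nonzero entries except the value $1$, and that $1$ appears exactly once when $m$ is even and not at all when $m$ is odd. Both facts follow since $2j/m=2-2j'/m$ iff $j+j'=m$, and the fixed point of this involution is $j=m/2$. We must also exclude $\lambda=2$: the largest entry of $A$ is $2(m-1)/m<2$, so every paired $\lambda$ lies strictly in $(0,2)$ and the square roots in Eq.~\eqref{eq:blocks} are real.
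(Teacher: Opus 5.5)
Your proposal is correct and follows the paper's proof essentially step for step: the same diagonal $A$ and reversed $B=\alpha\id-A$, the same pairing $j\leftrightarrow m-j$ with the $2\times2$ blocks $P_\lambda^\pm$ (the paper's $R_\lambda,T_\lambda$), the same singletons $0=0+0$ and $1=1+0$, and independent decompositions for $A$ and $B$. Your trace-and-determinant check of the blocks and your explicit index bookkeeping for $B$ only spell out details the paper compresses into ``direct multiplication'' and ``same multiset of diagonal entries.''
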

\begin{proof}
For $\lambda\in[0,2]$ set $s=\sqrt{\lambda(2-\lambda)}$. The matrices
\begin{equation}\label{supp:eq:blocks}
 R_\lambda=\frac12\begin{pmatrix}\lambda&s\\s&2-\lambda\end{pmatrix},
 \qquad T_\lambda=\frac12\begin{pmatrix}\lambda&-s\\-s&2-\lambda\end{pmatrix}
\end{equation}
are real symmetric projections, as direct multiplication using
$s^2=\lambda(2-\lambda)$ verifies. Their sum is
$\diag(\lambda,2-\lambda)$. A singleton $0$ or $1$ is a sum of
one-dimensional projections by $0=0+0$ or $1=1+0$.

Fix $m\ge3$ and put
\begin{equation}\label{supp:eq:constructAB}
 \alpha=2-\frac2m,\qquad
 A=\diag\left(\frac{2j}{m}:j=0,\ldots,m-1\right),\qquad B=\alpha\id-A.
\end{equation}
The entries of $B$ are $2(m-1-j)/m$, so its diagonal is the reverse
of that of $A$. In $A$ there is a singleton $0$. Each nonzero
entry with index $j$ pairs with the entry indexed by $m-j$, and their
sum is $2$. If $m$ is even, the fixed index $j=m/2$ contributes the
singleton $1$. Every other index belongs to a two-element pair.
Applying Eq.~\eqref{supp:eq:blocks} on these pairs, and permuting coordinates
back to their original order, gives $A=P_1+P_2$. Since $B$ has the
same multiset of diagonal entries, it also decomposes as $B=P_3+P_4$.
Their sum is $\alpha\id$ as required. The decompositions may use
different coordinate pairings.
\end{proof}

\section{A commuting representation for limits}
\label{supp:sec:limits}
\begin{lemma}\label{supp:lem:limit}
In any fixed finite bipartite scenario with binary outputs, every
$p$ in the closure of finite-dimensional tensor-product correlations
has a representation
\begin{equation}\label{supp:eq:commuting}
 p(a,b\mid i,j)=\ip{\psi}{A_i^aB_j^b\psi},
\end{equation}
where $\psi$ is a unit vector in a complex Hilbert space $\HH$,
$A_i,B_j$ are orthogonal projections, and $A_iB_j=B_jA_i$ for every
$i,j$. Here $A_i^1=A_i$, $A_i^0=\id-A_i$, $B_j^1=B_j$, and
$B_j^0=\id-B_j$; superscripts label outcomes, not powers.
\end{lemma}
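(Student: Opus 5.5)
The plan has two stages. First I show that each finite-dimensional behavior has a commuting projective representation with one uniform bound. Then I pass to limits using a state on a universal C$^*$-algebra and the GNS construction.

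\textbf{Step 1 (finite-dimensional points).} Take $p\in\Cq$ given by Eq.~\eqref{supp:eq:tensor-model} with POVMs $E_i^a,G_j^b$ on $\mathbb C^{d_A}\otimes\mathbb C^{d_B}$. Purify $\rho$ to a unit vector. Naimark-dilate each binary POVM to a projection on an enlarged local space, $E_i^1=V^*A_iV$ with one isometry $V$ on Alice's side. I would dilate all four measurements on a common space, for example by taking the direct sum of the individual dilations. Do the same for Bob. Then $A_i\otimes\id$ and $\id\otimes B_j$ are commuting projections, and $\psi=(V\otimes W)\psi_0$ reproduces $p$ in the form \eqref{supp:eq:commuting}.

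\textbf{Step 2 (universal algebra and states).} Let $\mathcal A$ be the universal unital C$^*$-algebra generated by projections $a_1,\dots,a_4,b_1,\dots,b_4$ with $[a_i,b_j]=0$. It exists because every generator has norm at most $1$ in any representation, so the supremum seminorm over all representations is finite. Each finite-dimensional realization from Step 1 gives a state $\omega$ on $\mathcal A$, namely $\omega(x)=\ip{\psi}{\pi(x)\psi}$, with $p(a,b\mid i,j)=\omega(a_i^ab_j^b)$.

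Now let $p_n\to p$ with $p_n\in\Cq$, and let $\omega_n$ be the corresponding states. The state space of $\mathcal A$ is weak-$*$ compact by Banach--Alaoglu, so some subnet converges to a state $\omega$. Evaluation at the fixed elements $a_i^ab_j^b$ is weak-$*$ continuous, so $\omega(a_i^ab_j^b)=\lim p_n(a,b\mid i,j)=p(a,b\mid i,j)$.

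\textbf{Step 3 (GNS).} The GNS construction for $\omega$ gives $(\pi,\HH,\psi)$ with $\psi$ a unit vector. Put $A_i=\pi(a_i)$ and $B_j=\pi(b_j)$. These are orthogonal projections satisfying $A_iB_j=B_jA_i$, and $\ip{\psi}{A_i^aB_j^b\psi}=p(a,b\mid i,j)$. This proves the lemma.

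\textbf{Main obstacle.} The only delicate point is making sure the limit object is a genuine representation rather than just a limit of numbers. The universal C$^*$-algebra with weak-$*$ compactness handles this cleanly.

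A more elementary alternative avoids C$^*$-algebra theory but needs more bookkeeping. It uses the full moment sequences $\omega_n(w)$ on all words $w$ in the generators. These numbers are bounded by $1$, so a diagonal argument gives a subsequence along which every moment converges. The limiting positive semidefinite moment functional then defines an inner product on noncommutative polynomials. One quotients by the null vectors and completes. Finally one checks that left multiplication by each generator is bounded, since $\norm{a_i x}\le\norm{x}$ follows from $x^*x-x^*a_ix\succeq0$ in the limit. The relations $a_i^2=a_i$ and $[a_i,b_j]=0$ survive because they hold exactly at every $n$.
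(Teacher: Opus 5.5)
Your argument is correct, but your main route differs from the paper's. The paper never introduces a universal C$^*$-algebra. After the same dilation-and-purification step, it works with the countable set of words in formal letters $a_i,b_j$. A diagonal argument gives a subsequence along which every moment $\ip{u_n\psi_n}{v_n\psi_n}$ converges. The Hilbert space is then built by hand: a positive semidefinite form on the free vector space over words, quotient by null vectors, and completion. Left multiplication is shown to be contractive, so it induces bounded operators, and selfadjointness, idempotence and $[A_i,B_j]=0$ pass to the limit because they hold exactly at every $n$. This is precisely the ``more elementary alternative'' you sketch at the end, i.e.\ a hand-made GNS construction.

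Your C$^*$-route packages the same content into three standard results: existence of the universal algebra, weak-$*$ compactness of its state space, and the GNS theorem. It is shorter and conceptually cleaner, and it carries over verbatim to any number of inputs. However, it relies on subnets (or on metrizability of the state space of a separable algebra) and on the existence of universal C$^*$-algebras. The paper's version uses only sequences and elementary Hilbert-space facts, which keeps the supplement self-contained.

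One imprecision in your Step 1: a literal direct sum of separately chosen Naimark dilations does not come with a single isometry compressing each projection to its effect. What you need is one common isometry for all of a party's measurements. The paper gets this from the explicit $2\times2$ block dilation of each effect $E$ on $V\oplus V$, with entries $E$, $\sqrt{E(\id-E)}$, $\sqrt{E(\id-E)}$, $\id-E$, using the same $Jv=(v,0)$ for every input. Alternatively, pad the dilation spaces to a common dimension and conjugate by unitaries so that the isometries coincide. You should also write the dilated vector as $(V\otimes W\otimes\id)\psi_0$ to account for the purifying ancilla. Finally, note that the lemma is stated for an arbitrary finite number of inputs, not just four, although nothing in your argument depends on that.
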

\begin{proof}
Choose finite-dimensional tensor-product behaviors $p_n\to p$.
We first realize each $p_n$ by commuting projections and a vector state.
For a positive contraction $E$ on a finite-dimensional space $V$, define
on $V\oplus V$
\begin{equation}\label{supp:eq:dilation}
 \widehat E=
 \begin{pmatrix}E&\sqrt{E(\id-E)}\\
 \sqrt{E(\id-E)}&\id-E\end{pmatrix}.
\end{equation}
Functional calculus and direct multiplication give
$\widehat E^*=\widehat E=\widehat E^2$.
For the isometry $Jv=(v,0)$, $J^*\widehat E J=E$.
All binary effects for one party admit this dilation on the same doubled
space with the same $J$; no same-party commutation is required.
Dilate both local spaces and embed the density operator using the
tensor product of the two isometries. Compression preserves every
probability in~\eqref{supp:eq:tensor-model}.
Purify this density operator on a finite-dimensional ancilla, on which
all measurement operators act trivially.
This gives a Hilbert space $\HH_n$, a unit vector $\psi_n$, and
orthogonal projections $A_{i,n},B_{j,n}$ with cross-party commutation
that realize $p_n$.

Let $\mathcal W$ be the countable set of finite words in formal letters
$a_i,b_j$, including the empty word $1$. Write
$w_n=w(A_{1,n},\ldots,B_{1,n},\ldots)$.
For each pair $u,v\in\mathcal W$ the sequence
$\ip{u_n\psi_n}{v_n\psi_n}$ is bounded, since every word has norm
at most $1$. By a diagonal subsequence argument, assume that all
these sequences converge simultaneously.

Let $\mathcal V$ be the free complex vector space with basis
$\mathcal W$. Define, for finitely supported coefficients,
\begin{equation}\label{supp:eq:form}
 \left\langle\sum_u c_u u,\sum_v d_v v\right\rangle_0
 =\sum_{u,v}\overline{c_u}d_v
       \lim_n\ip{u_n\psi_n}{v_n\psi_n}.
\end{equation}
This is positive semidefinite: each quadratic form is a limit of
squared norms. Let $\mathcal N=\{\xi:\langle\xi,\xi\rangle_0=0\}$.
Cauchy--Schwarz for positive semidefinite forms shows that
$\mathcal N$ is a subspace orthogonal to all of $\mathcal V$.
Form the inner-product quotient $\mathcal V/\mathcal N$ and complete
it to $\HH$. The empty word gives $\psi=[1]$, with $\norm\psi=1$.

Left multiplication $L_i$ by $a_i$ is contractive for the seminorm:
if $\xi=\sum_wc_ww$, then
\begin{align*}
 \norm{L_i\xi}_0^2
 &=\lim_n\left\|A_{i,n}\sum_wc_ww_n\psi_n\right\|^2\\
 &\le\lim_n\left\|\sum_wc_ww_n\psi_n\right\|^2=\norm{\xi}_0^2.
\end{align*}
It therefore preserves $\mathcal N$ and induces a bounded operator
$A_i$ on $\HH$.
Finite-dimensional selfadjointness and idempotence imply
\[
 \langle L_i\xi,\eta\rangle_0=\langle\xi,L_i\eta\rangle_0,
 \qquad \norm{(L_i^2-L_i)\xi}_0=0
\]
for all $\xi,\eta\in\mathcal V$.
These identities extend by density, so $A_i^*=A_i=A_i^2$.
Left multiplication by $b_j$ similarly gives projections $B_j$.
The evaluation of $(a_ib_j-b_ja_i)\xi$ is zero for every $n$ and
$\xi\in\mathcal V$, whence $A_iB_j=B_jA_i$ first on a dense subspace
and then everywhere.

Finally, for every word $w$,
\[
 \ip{\psi}{w(A,B)\psi}=
 \lim_n\ip{\psi_n}{w(A_n,B_n)\psi_n}.
\]
Expanding the complements $\id-A_i$ and $\id-B_j$ proves
\eqref{supp:eq:commuting} for all output pairs.
\end{proof}

\begin{remark}
Only the inclusion of $\Cqa$ in the commuting-operator model is used.
No equality between these two correlation models is assumed.
The projection formulation of the commuting model also covers commuting
binary POVMs. If Alice's and Bob's positive contractions commute across
parties, apply Eq.~\eqref{supp:eq:dilation} with separate two-dimensional
ancillas on the same underlying Hilbert space. The matrix entries are
continuous functions of the original effects, so the dilated Alice and
Bob projections still commute. Compression to the common initial
ancilla vector preserves all joint probabilities.
\end{remark}

\section{Synchrony forces a scalar operator identity}
The following argument gives a commuting-model formulation of the
finite-dimensional moment reduction used in Ref.~\cite{MPS}.

A behavior is \emph{synchronous} if
$p(1,0\mid i,i)=p(0,1\mid i,i)=0$ for every $i$.
For $\alpha\in\mathbb R$, define
\begin{equation}\label{supp:eq:witness}
 F(p,\alpha)=\sum_{i,j=1}^4p(1,1\mid i,j)
 -2\alpha\sum_{i=1}^4p(1,1\mid i,i)+\alpha^2.
\end{equation}
\begin{lemma}\label{supp:lem:scalar}
If $p\in\Cqc$ is synchronous and $F(p,\alpha)=0$, then four
orthogonal projections on a nonzero Hilbert space have sum $\alpha\id$.
\end{lemma}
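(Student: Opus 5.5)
Take a commuting representation $p(a,b\mid i,j)=\ip{\psi}{A_i^aB_j^b\psi}$ on $\HH$, with $A_i,B_j$ orthogonal projections and $[A_i,B_j]=0$. The plan has three steps: turn synchrony into $A_i\psi=B_i\psi$, turn $F(p,\alpha)=0$ into $(S-\alpha\id)\psi=0$ for $S=\sum_iA_i$, and then upgrade this vector identity to an operator identity on a suitable invariant subspace.

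\emph{Step 1 (synchrony).} Expanding with selfadjointness, idempotence, and $A_iB_i=B_iA_i$ gives
\[
\norm{(A_i-B_i)\psi}^2=\ip{\psi}{A_i\psi}+\ip{\psi}{B_i\psi}-2\ip{\psi}{A_iB_i\psi}.
\]
The right-hand side equals $p(1,0\mid i,i)+p(0,1\mid i,i)$, which vanishes by synchrony. Hence $A_i\psi=B_i\psi$.

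\emph{Step 2 (zero variance).} Expand $\norm{(S-\alpha\id)\psi}^2=\sum_{i,j}\ip{A_i\psi}{A_j\psi}-2\alpha\sum_i\ip{\psi}{A_i\psi}+\alpha^2$. Replacing $A_j\psi$ by $B_j\psi$ gives $\ip{A_i\psi}{A_j\psi}=\ip{\psi}{A_iB_j\psi}=p(1,1\mid i,j)$. Likewise $\ip{\psi}{A_i\psi}=\ip{\psi}{A_iB_i\psi}=p(1,1\mid i,i)$. So the expansion equals $F(p,\alpha)=0$, and $S\psi=\alpha\psi$.

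\emph{Step 3 (operator identity).} Let $K$ be the closed span of $w(B)\psi$ over all words $w$ in $B_1,\dots,B_4$, including the empty word. Then $K\ni\psi\neq0$. Three facts follow.
\begin{itemize}
\item $K$ is invariant under each $A_i$. By cross-commutation and Step 1, $A_iw(B)\psi=w(B)A_i\psi=w(B)B_i\psi\in K$. Bounded operators preserve closures, so this extends to all of $K$.
\item The restrictions $P_i=A_i|_K$ are orthogonal projections on $K$. An invariant subspace of a selfadjoint operator is reducing, and $P_i$ inherits selfadjointness and idempotence.
\item $\sum_iP_i=\alpha\id_K$. Since $S$ commutes with every $B_j$, we have $(S-\alpha\id)w(B)\psi=w(B)(S-\alpha\id)\psi=0$. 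So $S-\alpha\id$ vanishes on a dense subspace of $K$, and by boundedness on all of $K$.
\end{itemize}

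The main point needing care is Step 3. One must choose $K$ generated by Bob's words only, not by all words, so that the commutation $[S,w(B)]=0$ is available. One must also check invariance under the $A_i$ through the synchrony substitution, rather than assume faithfulness of $\psi$ or finite dimensionality. Steps 1 and 2 are routine expansions.
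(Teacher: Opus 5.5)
Your proof is correct and follows the paper's argument essentially step for step. Synchrony gives $A_i\psi=B_i\psi$, the zero-variance expansion equals $F(p,\alpha)$ and so gives $(S-\alpha\id)\psi=0$, and restricting to the closed span $K$ of Bob words applied to $\psi$ (invariant under each $A_i$ via the synchrony substitution) upgrades this to $\sum_iP_i=\alpha\id_K$ by density and boundedness.
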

\begin{proof}
Take a commuting projection representation $A_i,B_j,\psi$ of $p$.
Cross-party commutation and the projection identities yield
\begin{align*}
 \norm{(A_i-B_i)\psi}^2
 &=\ip{\psi}{(A_i+B_i-2A_iB_i)\psi}\\
 &=p(1,0\mid i,i)+p(0,1\mid i,i)=0.
\end{align*}
Thus $A_i\psi=B_i\psi$.
Writing $S=\sum_iA_i$, we have
\[
 \ip{\psi}{A_iA_j\psi}=\ip{\psi}{A_iB_j\psi}=p(1,1\mid i,j),
 \qquad \ip{\psi}{A_i\psi}=p(1,1\mid i,i).
\]
It follows that
\begin{equation}\label{supp:eq:vector}
 \norm{(S-\alpha\id)\psi}^2=F(p,\alpha)=0.
\end{equation}
To promote this vector identity to an operator identity, consider
\[
 \mathcal D=\spanop\{w(B_1,\ldots,B_4)\psi:
                              w\text{ is a finite word}\},
 \qquad K=\overline{\mathcal D}.
\]
The empty word is included, so $\psi\in K$ and $K\ne\{0\}$.
For every Bob word $w$,
\[
 A_iw(B)\psi=w(B)A_i\psi=w(B)B_i\psi\in\mathcal D.
\]
Thus $A_i\mathcal D\subseteq\mathcal D$. For $x\in K$, choose
$x_n\in\mathcal D$ with $x_n\to x$. Boundedness of $A_i$ gives
$A_ix_n\to A_ix$, so closedness of $K$ implies $A_ix\in K$.
Hence the restrictions $P_i=A_i|_K$ are well-defined bounded operators
on $K$. They satisfy $P_i^2=P_i$, and for $x,y\in K$,
$\ip{P_ix}{y}=\ip{A_ix}{y}=\ip{x}{A_iy}=\ip{x}{P_iy}$.
Thus each $P_i$ is selfadjoint and is an orthogonal projection on $K$.

Equation~\eqref{supp:eq:vector} gives $(S-\alpha\id)\psi=0$.
Since every $A_i$ commutes with every $B_j$, $S$ commutes with
every Bob word. Consequently,
\[
 (S-\alpha\id)w(B)\psi=w(B)(S-\alpha\id)\psi=0.
\]
By linearity, $S-\alpha\id$ vanishes on $\mathcal D$. For any $x\in K$,
take $x_n\in\mathcal D$ converging to $x$. Since $\norm{S}\le4$,
\[
 \norm{(S-\alpha\id)x}
 =\norm{(S-\alpha\id)(x-x_n)}
 \le(4+|\alpha|)\norm{x-x_n}\longrightarrow0.
\]
Therefore $S|_K=\alpha\id_K$. Since $P_i=A_i|_K$, this is exactly
$\sum_iP_i=\alpha\id_K$.
\end{proof}

\section{The exact polynomial slice}
For $\alpha\in\mathbb R$ define $p_\alpha\in\mathbb R^{64}$ by
\begin{equation}\label{supp:eq:curve}
\begin{array}{c|cccc}
 &p_\alpha(1,1\mid i,j)&p_\alpha(1,0\mid i,j)
 &p_\alpha(0,1\mid i,j)&p_\alpha(0,0\mid i,j)\\[2pt]\hline
 i=j &\alpha/4&0&0&1-\alpha/4\\[3pt]
 i\ne j&\dfrac{\alpha(\alpha-1)}{12}
 &\dfrac{\alpha(4-\alpha)}{12}
 &\dfrac{\alpha(4-\alpha)}{12}
 &\dfrac{(3-\alpha)(4-\alpha)}{12}
\end{array}
\end{equation}
Every coordinate is a polynomial with rational coefficients. For
$1<\alpha<2$ the entries are nonnegative and normalized, and the
outcome-$1$ marginals are $\alpha/4$, independent of the opposite
input. Thus the curve consists of synchronous nonsignaling behaviors
on this interval.

\begin{theorem}[Exact slice]\label{supp:thm:exact}
For $t\in\{q,qa,qc\}$,
\begin{equation}\label{supp:eq:exact}
 \{\alpha\in(1,2):p_\alpha\in\Ct\}
 =\{2-2/m:m=3,4,\ldots\}.
\end{equation}
\end{theorem}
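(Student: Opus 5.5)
The proof will establish two inclusions and close the chain using $\Cq\subseteq\Cqa\subseteq\Cqc$ from Lemma~\ref{supp:lem:limit}. The three parameter sets are nested in the same order. So it suffices to show that $\{\alpha\in(1,2):p_\alpha\in\Cqc\}$ is contained in the discrete sequence, and that every $p_{2-2/m}$ with $m\ge3$ lies in $\Cq$.

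\emph{Necessity.} Fix $\alpha\in(1,2)$ with $p_\alpha\in\Cqc$. From the table~\eqref{supp:eq:curve}, $p_\alpha$ is synchronous. I will then compute $F(p_\alpha,\alpha)$ from~\eqref{supp:eq:witness}. The diagonal terms sum to $4\cdot\alpha/4=\alpha$, and the twelve off-diagonal terms contribute $12\cdot\alpha(\alpha-1)/12=\alpha^2-\alpha$. Hence $\sum_{i,j}p_\alpha(1,1\mid i,j)=\alpha^2$. The middle term is $-2\alpha\cdot\alpha$, so $F=\alpha^2-2\alpha^2+\alpha^2=0$. Lemma~\ref{supp:lem:scalar} then yields four orthogonal projections on a nonzero Hilbert space with sum $\alpha\id$. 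Proposition~\ref{supp:prop:discrete} then forces $\alpha=2-2/m$ with $m\ge3$.

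\emph{Sufficiency.} Fix $m\ge3$ and $\alpha=2-2/m$. Take the real symmetric projections $P_1,\dots,P_4$ on $\mathbb C^m$ from Proposition~\ref{supp:prop:construction}, and symmetrize them as $Q_i=\bigoplus_{\pi\in S_4}P_{\pi(i)}$ on $d=24m$ dimensions. Each $Q_i$ is still a real projection, and $\sum_iQ_i=\alpha\id$. Permutation invariance makes $\tau(Q_i)$ independent of $i$, so taking the trace of the sum gives $\tau(Q_i)=\alpha/4$. Likewise, $\tau(Q_iQ_j)$ for $i\ne j$ is a common value $c$, since conjugating a pair by the symmetrization is a transitive action on ordered distinct pairs. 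Expanding $\tau[(\sum Q_i)^2]=\alpha^2$ gives $\alpha+12c=\alpha^2$, so $c=\alpha(\alpha-1)/12$.

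Next, use the maximally entangled state $\Phi_d$ with Alice's outcome-1 effect $Q_i$ and Bob's $Q_j^{\mathsf T}=Q_j$. The identity $\ip{\Phi_d}{X\otimes Y^{\mathsf T}\Phi_d}=\tau(XY)$ gives $p(1,1\mid i,j)=\tau(Q_iQ_j)$ and marginals $\tau(Q_i)=\alpha/4$. The remaining entries then follow by complements. At $i=j$, $p(1,0\mid i,i)=\tau(Q_i(\id-Q_i))=0$. For $i\ne j$, $p(1,0\mid i,j)=\alpha/4-c=\alpha(4-\alpha)/12$. These match~\eqref{supp:eq:curve} entry by entry, so $p_\alpha\in\Cq$.

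No step is a genuine obstacle, since the heavy lifting is done in Proposition~\ref{supp:prop:discrete} and Lemma~\ref{supp:lem:scalar}. The point needing most care is the symmetrization. I must check that the off-diagonal traces are truly uniform over all ordered pairs $i\ne j$, and that the transpose is harmless, which holds because the projections are real symmetric. The remaining work is bookkeeping of the sixteen-entry verification, which I would do uniformly by complement identities rather than case by case.
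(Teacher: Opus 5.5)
Your proposal is correct and follows the paper's proof essentially step for step. For necessity, you check synchrony and $F(p_\alpha,\alpha)=0$, then apply Lemma~\ref{supp:lem:scalar} and Proposition~\ref{supp:prop:discrete} in $\Cqc$. For sufficiency, you use the $S_4$-symmetrized direct sum on $\mathbb C^{24m}$ measured on $\Phi_d$. The one cosmetic difference is how you get the traces: you obtain uniformity of $\tau(Q_i)$ and $\tau(Q_iQ_j)$ from transitivity of $S_4$ and then solve using $\tau(\sum_iQ_i)$ and $\tau[(\sum_iQ_i)^2]$, while the Supplement counts the label multiplicities $6$ and $2$ directly (the Letter's main text argues as you do).
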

\begin{proof}
For necessity it suffices to work in $\Cqc$. Equation~\eqref{supp:eq:curve}
gives
\[
 \sum_i p_\alpha(1,1\mid i,i)=\alpha,\qquad
 \sum_{i,j}p_\alpha(1,1\mid i,j)
 =\alpha+\alpha(\alpha-1)=\alpha^2.
\]
Hence the synchronous behavior $p_\alpha$ satisfies
$F(p_\alpha,\alpha)=0$, where $F$ is defined in Eq.~\eqref{supp:eq:witness}.
Lemma~\ref{supp:lem:scalar} and Proposition~\ref{supp:prop:discrete} force
$\alpha=2-2/m$ for an integer $m\ge3$.

For sufficiency, fix $m\ge3$, put $\alpha=2-2/m$, and choose
$P_1,\ldots,P_4$ on $\mathbb C^m$ using
Proposition~\ref{supp:prop:construction}. On dimension $d=24m$ let
\[
 Q_i=\bigoplus_{\pi\in S_4}P_{\pi(i)},\qquad \tau(X)=\Tr(X)/d.
\]
These are projections and $\sum_iQ_i=\alpha\id$.
For each fixed $i$, every label occurs $6$ times as $\pi(i)$, so
\begin{equation}\label{supp:eq:marginal}
 \tau(Q_i)=\frac{6}{24m}\sum_{r=1}^4\Tr(P_r)=\frac\alpha4.
\end{equation}
For distinct $i,j$, every ordered pair of distinct labels occurs $2$
times as $(\pi(i),\pi(j))$. Expanding the square of the scalar sum,
\[
 \sum_{r\ne s}\Tr(P_rP_s)
 =\Tr\left[\left(\sum_rP_r\right)^2\right]-\sum_r\Tr(P_r)
 =m(\alpha^2-\alpha),
\]
gives
\begin{equation}\label{supp:eq:pair}
 \tau(Q_iQ_j)=\frac{2m(\alpha^2-\alpha)}{24m}
 =\frac{\alpha(\alpha-1)}{12}\qquad (i\ne j).
\end{equation}
Use the maximally entangled unit vector
$\Phi_d=d^{-1/2}\sum_{r=1}^d e_r\otimes e_r$.
For arbitrary matrices $X,Y\in M_d(\mathbb C)$,
\begin{equation}\label{supp:eq:ME}
 \ip{\Phi_d}{(X\otimes Y^{\mathsf T})\Phi_d}
 =\frac1d\sum_{r,s}X_{rs}Y_{sr}=\tau(XY).
\end{equation}
The transpose is taken in the displayed basis. Alice measures
$Q_i$ and $\id-Q_i$; Bob measures $Q_j^{\mathsf T}$ and
$\id-Q_j^{\mathsf T}$. Equations~\eqref{supp:eq:marginal}--\eqref{supp:eq:ME}
give the $(1,1)$ probabilities and both marginals of $p_\alpha$.
For $i=j$, mixed outcomes have probability zero because
$Q_i(\id-Q_i)=0$. For $i\ne j$ the other probabilities are
\begin{align*}
 p(1,0\mid i,j)=p(0,1\mid i,j)
 &=\frac\alpha4-\frac{\alpha(\alpha-1)}{12}
 =\frac{\alpha(4-\alpha)}{12},\\
 p(0,0\mid i,j)
 &=1-\frac\alpha2+\frac{\alpha(\alpha-1)}{12}
 =\frac{(3-\alpha)(4-\alpha)}{12}.
\end{align*}
Thus $p_{2-2/m}\in\Cq$. The inclusions of the three models finish
the proof.
\end{proof}

\section{Semialgebraic and semianalytic obstructions}
\begin{corollary}\label{supp:cor:nonsemi}
Every set $C$ with $\Cq\subseteq C\subseteq\Cqc$ is nonsemialgebraic.
In particular this holds for $\Cq$, $\Cqa$, and $\Cqc$.
\end{corollary}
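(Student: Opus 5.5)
The plan is to pull back any putative semialgebraic description of $C$ along the polynomial curve $\alpha\mapsto p_\alpha$ of Eq.~\eqref{supp:eq:curve}, and then use the one-dimensional structure of semialgebraic sets to get a contradiction.

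First, I pin down the slice. Since $\Cq\subseteq C\subseteq\Cqc$, Theorem~\ref{supp:thm:exact} applied with $t=q$ and with $t=qc$ gives
\[
 \{2-2/m:m\ge3\}\subseteq\{\alpha\in(1,2):p_\alpha\in C\}\subseteq\{2-2/m:m\ge3\}.
\]
So $T_C:=\{\alpha\in(1,2):p_\alpha\in C\}$ is exactly the infinite discrete set $\{2-2/m:m\ge3\}$.

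Second, suppose $C$ were semialgebraic, say a finite Boolean combination of sets $\{f_k(p)>0\}$ and $\{f_k(p)=0\}$ with real polynomials $f_k$. Each coordinate of $p_\alpha$ is a polynomial in $\alpha$, so every $g_k(\alpha)=f_k(p_\alpha)$ is a univariate real polynomial. Then $T_C$ is the same Boolean combination of $\{g_k>0\}$ and $\{g_k=0\}$, intersected with the interval $(1,2)$, and hence is semialgebraic in $\mathbb R$. This is just the preimage of a semialgebraic set under a polynomial map, so I can avoid Tarski--Seidenberg and do the substitution directly.

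Third, I use the structure of semialgebraic subsets of the line. Each $g_k$ is either identically zero or has finitely many roots. Let $Z$ be the finite union of the roots of the nonzero $g_k$. On each connected component of $(1,2)\setminus Z$, every $g_k$ has constant sign, so $T_C$ either contains that whole open interval or misses it. Therefore $T_C$ is a finite union of open intervals and points. Such a set is either finite or contains a nondegenerate interval, and so is uncountable. But $\{2-2/m\}$ is infinite and countable, which is a contradiction. The special cases $C=\Cq,\Cqa,\Cqc$ follow immediately, since $\Cq\subseteq\Cqa\subseteq\Cqc$ holds by Lemma~\ref{supp:lem:limit}.

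None of these steps is a real obstacle once Theorem~\ref{supp:thm:exact} is available. The only point needing care is the observation that the sandwich hypothesis forces $T_C$ to equal the discrete sequence, which is the reason both endpoints of the exact-slice theorem are used.
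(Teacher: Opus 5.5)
Your proposal is correct and follows the paper's proof essentially step for step. In both, the sandwich $\Cq\subseteq C\subseteq\Cqc$ together with Theorem~\ref{supp:thm:exact} pins the slice to $\{2-2/m:m\ge3\}$; direct substitution of the polynomial curve then yields a semialgebraic subset of $(1,2)$, and sign constancy between the finitely many roots makes it a finite union of intervals and points, which cannot be infinite and countable.
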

\begin{proof}
The two inclusions and Theorem~\ref{supp:thm:exact} force the same exact
slice for $C$. If $C$ were semialgebraic, substituting the polynomial
coordinates of $p_\alpha$ into its finite Boolean description, and
adjoining $1<\alpha<2$, would give a semialgebraic description of
$\{2-2/m:m\ge3\}$. A finite collection of nonzero univariate
polynomials has finitely many roots. On each complementary interval
all signs, and hence Boolean truth values, are constant. Identically
zero polynomials contribute constant truth values. Every such set
is therefore a finite union of intervals and individual points,
which cannot be infinite and countable.
\end{proof}

\begin{proposition}[Failure of semianalyticity at a classical point]
\label{supp:prop:analytic}
The behavior $p_2$ is classical local. Every set
$\Cq\subseteq C\subseteq\Cqc$ fails to be semianalytic at $p_2$:
there is no open neighborhood $U$ of $p_2$ in $\mathbb R^{64}$ and
no finite collection of real analytic functions on $U$ whose signs,
combined by a finite Boolean formula, describe $C\cap U$.
\end{proposition}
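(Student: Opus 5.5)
The proof splits into two parts: first an explicit local model for $p_2$, then a one-variable restriction argument along the curve $p_\alpha$ near $\alpha=2$, combined with Theorem~\ref{supp:thm:exact}.

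\emph{Classicality of $p_2$.} Let the shared randomness be a uniformly random two-element subset $T\subset\{1,2,3,4\}$. Each party outputs $1$ exactly when its input lies in $T$. Equal inputs then give equal outputs, so the mixed-output probabilities vanish. The outcome-$1$ marginal is $\Pr[i\in T]=3/6=1/2=\alpha/4|_{\alpha=2}$. For $i\ne j$, $\Pr[i,j\in T]=1/6=\alpha(\alpha-1)/12|_{\alpha=2}$. The remaining entries follow from the marginals, and they match the table~\eqref{supp:eq:curve} at $\alpha=2$. Hence $p_2$ is local, and in particular $p_2\in\Cq$.

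\emph{Nonsemianalyticity.} Suppose $U\ni p_2$ is open, $f_1,\dots,f_k$ are real analytic on $U$, and $C\cap U$ is given by a Boolean formula in the signs of the $f_l$. Since $\alpha\mapsto p_\alpha$ is polynomial and $p_2\in U$, there is $\epsilon_0>0$ with $p_\alpha\in U$ for $|\alpha-2|<\epsilon_0$. Each $g_l(\alpha)=f_l(p_\alpha)$ is then real analytic on $(2-\epsilon_0,2+\epsilon_0)$. By the identity theorem for one-variable real analytic functions, each $g_l$ is either identically zero near $2$, or its zeros are isolated and do not accumulate at $2$. In the second case there is a punctured neighborhood of $2$ free of zeros of $g_l$.

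Choose $\epsilon>0$ so small that every $g_l$ has constant sign on $(2-\epsilon,2)$, which is possible because there are finitely many functions. Then the truth value of the Boolean formula, and hence the truth of $p_\alpha\in C$, is constant on $(2-\epsilon,2)$. But Theorem~\ref{supp:thm:exact} holds for $C$, since $\Cq\subseteq C\subseteq\Cqc$ pins down the slice. So on $(\max(1,2-\epsilon),2)$, membership holds exactly at the points $2-2/m$. For large $m$ these points lie in the interval, and so do non-lattice points between consecutive ones. Membership is therefore not constant there, which is a contradiction.

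The only delicate point is making sure analyticity is used at $\alpha=2$ itself, so that the zeros of the nonzero $g_l$ cannot accumulate at $2$. This is exactly why $p_2$, the accumulation point, must lie in $U$. Everything else is routine.
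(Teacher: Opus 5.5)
Your proposal is correct and follows essentially the same route as the paper. Both use the uniform-random-pair local model for $p_2$. Both then compose the analytic functions with the polynomial curve and use the fact that nonzero analytic germs at $\alpha=2$ have no nearby zeros, which makes membership constant on $(2-\epsilon,2)$. This contradicts the exact slice of Theorem~\ref{supp:thm:exact}, which holds for every $C$ with $\Cq\subseteq C\subseteq\Cqc$.
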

\begin{proof}
Let the parties share a random variable uniformly distributed among
the six two-element subsets $S\subset\{1,2,3,4\}$. On input $i$,
both parties output $1$ if and only if $i\in S$.
Each label lies in three of the six subsets, and each pair of distinct
labels lies in one. Therefore the marginals are $1/2$, distinct-input
$(1,1)$ probabilities are $1/6$, and equal inputs always give equal
outputs. This is exactly $p_2$. Being a finite mixture of deterministic
local strategies, it belongs to $\Cq$.

Suppose a finite analytic description using functions $f_1,\ldots,f_N$
on $U$ existed. Continuity of $p_\alpha$ gives an open interval $J$
containing $2$ such that $p_\alpha\in U$ for $\alpha\in J$.
Each composition $g_r(\alpha)=f_r(p_\alpha)$ is real analytic on $J$.
If its germ at $2$ is zero, it vanishes on some neighborhood of $2$.
Otherwise its zeros are isolated, so there is a punctured neighborhood
of $2$ where it does not vanish. Taking the minimum of finitely many
neighborhood sizes, all $g_r$ have constant signs on an interval
$(2-\epsilon,2)\subset J\cap(1,2)$. Thus the Boolean formula has a
constant truth value on that interval. Since it contains infinitely
many $2-2/m$, this value must be true. The interval also contains
parameters not of that form, contradicting Theorem~\ref{supp:thm:exact}.
\end{proof}

\begin{remark}[Scope of the analytic statement]
Analyticity on a neighborhood containing the accumulation point is
essential. On the punctured domain $\alpha<2$, the analytic equation
$\sin(2\pi/(2-\alpha))=0$ can describe the discrete parameter sequence
when restricted to $\alpha\in(1,2)$. Nor does the proposition rule out
general projections of analytic sets: the conditions
$(2-\alpha)t=2$, $\sin(\pi t)=0$, and $t\ge3$ give an analytic lift
of that sequence using an unbounded auxiliary variable.
The finite-lift exclusion below is specifically semialgebraic.
The local analytic statement addresses the analytic-description
question in Ref.~\cite[Problem~2.10]{Tsirelson} under the explicit
requirement that the defining functions be analytic at the behavior
being described. No equivalence of tensor and commuting models from
that historical formulation is assumed here.
\end{remark}

\section{Finite lifts and outer approximations}
\begin{corollary}\label{supp:cor:lifts}
If $\Cq\subseteq C\subseteq\Cqc$, then $C$ is not the projection
of a semialgebraic set in any finite-dimensional real space.
In particular, $C$ has no exact finite semidefinite representation,
with arbitrary real coefficients and finitely many auxiliary variables.
\end{corollary}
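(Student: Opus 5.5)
The plan is to argue by contradiction via the Tarski--Seidenberg theorem and then reduce to Corollary~\ref{supp:cor:nonsemi}. Suppose there are a finite $r$ and a semialgebraic set $Z\subset\mathbb R^{64+r}$ with $C=\pi(Z)$, where $\pi:\mathbb R^{64+r}\to\mathbb R^{64}$ forgets the last $r$ coordinates. By Tarski--Seidenberg (Ref.~\cite{BCR}), the image of a semialgebraic set under a polynomial, in particular linear, map is semialgebraic, with no restriction on the real coefficients. So $C$ itself would be semialgebraic. Since $\Cq\subseteq C\subseteq\Cqc$, this contradicts Corollary~\ref{supp:cor:nonsemi}. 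An equivalent one-variable route is to pull back along the curve first: the set $\{(\alpha,z):(p_\alpha,z)\in Z,\ 1<\alpha<2\}$ is semialgebraic, and its projection to the $\alpha$-axis is the countably infinite set $\{2-2/m\}$ by Theorem~\ref{supp:thm:exact}. That is impossible for a semialgebraic subset of the line.

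For the semidefinite statement, I would define an exact finite semidefinite representation as
\[
C=\Bigl\{p:\exists z\in\mathbb R^r,\ M_0+\sum_k p_kM_k+\sum_l z_lN_l\succeq0\Bigr\},
\]
with real symmetric matrices of some fixed size $n$, possibly with finitely many such blocks, additional linear equalities, or Hermitian blocks written as real symmetric ones. The key step is to show the lifted set is semialgebraic. A real symmetric matrix $X$ is positive semidefinite if and only if all of its principal minors are nonnegative. Each principal minor of the affine pencil is a polynomial in $(p,z)$. So the feasible set in $\mathbb R^{64+r}$ is a finite conjunction of polynomial inequalities, and the first paragraph applies. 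Linear equality constraints and several LMI blocks only add finitely many polynomial conditions.

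I expect no step to be a real obstacle, since the argument is a direct reduction. The only point needing care is that the principal-minor criterion must include all principal minors, not just the leading ones. Leading minors are not enough for semidefiniteness, only for strict positive definiteness, and I would note this explicitly. I would also stress that the auxiliary dimension $r$ and the matrix size $n$ are fixed but arbitrary, and the coefficients arbitrary reals. This is what separates the statement from the NPA hierarchy as a whole: the hierarchy uses unboundedly many variables, while each single level $\mathcal Q_\ell$ falls under the corollary.
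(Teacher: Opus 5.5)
Your proposal is correct and follows essentially the same route as the paper. Tarski--Seidenberg turns any finite lift into a semialgebraic description of $C$, contradicting Corollary~\ref{supp:cor:nonsemi}, and semidefinite lifts (including Hermitian blocks, several blocks, and affine equalities) are semialgebraic because positivity is equivalent to nonnegativity of all principal minors; your optional pullback along $p_\alpha$ is just a one-variable variant of the same argument.
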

\begin{proof}
The Tarski--Seidenberg theorem says that coordinate projections of
semialgebraic sets are semialgebraic. An alleged finite lift would
therefore contradict Corollary~\ref{supp:cor:nonsemi}. For semidefinite
lifts, a real symmetric affine matrix inequality is equivalent to
nonnegativity of finitely many principal minors. A complex Hermitian
constraint can instead be represented by a real symmetric block
constraint on real and imaginary parts. Finite collections of these
constraints, together with affine equalities, define semialgebraic
sets. Their projections are therefore excluded.
\end{proof}

\begin{proposition}[A tail interval is unavoidable]\label{supp:prop:tail}
Let $R\subset\mathbb R^{64}$ be semialgebraic with $\Cq\subseteq R$.
There is $\beta\in(1,2)$ such that $p_\alpha\in R$ for every
$\alpha\in(\beta,2)$.
\end{proposition}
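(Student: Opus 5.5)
The plan is to pull the problem back to the real line along the polynomial curve $p_\alpha$ and use the one-variable structure of semialgebraic sets. Define
\[
 T=\{\alpha\in(1,2):p_\alpha\in R\}.
\]
Each coordinate of $p_\alpha$ is a polynomial in $\alpha$ (see Eq.~\eqref{supp:eq:curve}). So substituting these coordinates into a finite Boolean description of $R$, and adjoining the conditions $\alpha>1$ and $2-\alpha>0$, gives a finite Boolean description of $T$ by signs of univariate polynomials. Hence $T$ is a semialgebraic subset of $\mathbb R$.

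Next I will show that $T$ contains the sequence $2-2/m$ for all $m\ge3$. By the sufficiency part of Theorem~\ref{supp:thm:exact}, each $p_{2-2/m}$ lies in $\Cq$. Since $\Cq\subseteq R$, each of these points lies in $R$, so every $2-2/m$ belongs to $T$.

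The key step is the structure of semialgebraic subsets of the line, as in the proof of Corollary~\ref{supp:cor:nonsemi}. Let $f_1,\ldots,f_N$ be the polynomials appearing in the description of $T$. Those that are not identically zero have finitely many real roots in total. I then choose $\beta\in(1,2)$ larger than every such root lying in $(1,2)$, and take $\beta>1$ in any case. On $(\beta,2)$ every $f_r$ then has constant sign: the nonzero ones have no root there, and the identically zero ones are trivially constant. The Boolean truth value defining $T$ is therefore constant on $(\beta,2)$. This interval contains $2-2/m$ for every $m$ with $2/m<2-\beta$, and such points lie in $T$. So the constant value is ``true'', and $(\beta,2)\subseteq T$, which means $p_\alpha\in R$ for all $\alpha\in(\beta,2)$.

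I do not expect a real obstacle. The only point needing care is that the sign pattern is constant on the open interval ending at the accumulation point $2$. This follows from finiteness of the root sets and does not require knowing the behavior of $R$ at $p_2$ itself.
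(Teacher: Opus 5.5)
Your proposal is correct and follows essentially the same argument as the paper. You pull $R$ back along the polynomial curve to a semialgebraic subset of $(1,2)$ containing every $2-2/m$, then choose $\beta$ beyond all roots in $(1,2)$, so the membership value is constant on $(\beta,2)$ and therefore true.
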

\begin{proof}
The inverse image $T_R=\{\alpha\in(1,2):p_\alpha\in R\}$ is
semialgebraic and contains $2-2/m$ for every $m\ge3$.
In a finite polynomial description of $T_R$, choose $\beta<2$
larger than every root strictly below $2$ sufficiently close to $2$,
and larger than $1$. On $(\beta,2)$ every polynomial sign is constant.
The interval contains members of the sequence, so its constant
membership value must be true.
\end{proof}

For the standard complete NPA hierarchy~\cite{NPA}, a fixed finite
level $\mathcal Q_\ell$ is the projection of a finite semidefinite
feasibility set. It contains $\Cqc$ and is semialgebraic. Hence
Proposition~\ref{supp:prop:tail} provides $\beta_\ell\in(1,2)$ with
\[
 p_\alpha\in\mathcal Q_\ell\quad\text{for all }\alpha\in(\beta_\ell,2).
\]
For every parameter in this interval except the discrete values
$2-2/m$, Theorem~\ref{supp:thm:exact} gives $p_\alpha\notin\Cqc$.
Thus each fixed finite level strictly contains $\Cqc$ and cannot
equal $\Cqa$ either. The completeness identity
$\bigcap_\ell\mathcal Q_\ell=\Cqc$ means that each fixed
$p_\alpha\notin\Cqc$ is nevertheless excluded at some finite level.
These statements do not assert strict inclusion between all successive
levels, prohibit finite exactness for a particular Bell functional,
or supply a quantitative convergence rate or robustness bound.

\section{Physical consequences and comparison with smaller scenarios}
\label{supp:sec:physical}

\begin{proposition}[Nonlocal behavior with local correlators]
\label{supp:prop:local-correlators}
No $p_\alpha$ with $1<\alpha<2$ is Bell local. Nevertheless its matrix
of binary correlators admits a local model for every such $\alpha$.
\end{proposition}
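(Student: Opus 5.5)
Two claims must be proved: no $p_\alpha$ with $1<\alpha<2$ is Bell local, and the correlator matrix of every such $p_\alpha$ has a local model.

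\emph{Nonlocality.} Suppose $p_\alpha$ is local. Write it as a finite (or measurable) mixture over a hidden variable $\lambda$ of deterministic strategies. Each strategy gives Alice bits $x_1(\lambda),\ldots,x_4(\lambda)$ and Bob bits $y_1(\lambda),\ldots,y_4(\lambda)$. Local behaviors lie in $\Cq$, and Carathéodory's theorem reduces the model to finitely many deterministic points, so no measurability issues arise.

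Synchrony, $p(1,0\mid i,i)=p(0,1\mid i,i)=0$, forces $x_i=y_i$ almost surely. Hence $p(1,1\mid i,j)=\mathbb E[x_ix_j]$.

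Set $Z=\sum_ix_i$. Then the witness of Eq.~\eqref{supp:eq:witness} evaluates to
\[
F(p_\alpha,\alpha)=\mathbb E\bigl[(Z-\alpha)^2\bigr].
\]
Since $F(p_\alpha,\alpha)=0$, we get $Z=\alpha$ almost surely. But $Z$ takes only integer values while $\alpha\in(1,2)$ is not an integer, a contradiction.

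An alternative route: local behaviors lie in $\Cq$ with commuting local algebras. The scalar identity $\sum_iP_i=\alpha\id$ would then hold on a space where the $P_i$ are simultaneously diagonal, which again forces integer $\alpha$. The direct probabilistic argument is cleaner, so I would use it.

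\emph{Local model for correlators.} First compute the correlators $C_{ij}$ from the table~\eqref{supp:eq:curve}:
\[
C_{ii}=1,\qquad C_{ij}=1-4u+4v \quad (i\ne j),
\]
with $u=\alpha/4$ and $v=\alpha(\alpha-1)/12$. This simplifies to $c_\alpha=\bigl((\alpha-2)^2-1\bigr)/3$, as stated in Eq.~\eqref{eq:correlators}. On $(1,2)$ the function $c_\alpha$ ranges over $(-1/3,0)$.

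Now take two local strategies.
\begin{itemize}
\item Strategy $U$: a uniformly random sign string $s\in\{\pm1\}^4$, and both parties output $s_i$. It gives correlators $\delta_{ij}$.
\item Strategy $B$: a uniformly random balanced string (two $+1$ entries and two $-1$ entries), and both parties output $s_i$. It gives $C_{ii}=1$, and for $i\neq j$ the correlator is $(2\cdot1+4\cdot(-1))/6=-1/3$. Both strategies have zero means.
\end{itemize}
Mix $B$ with weight $w=-3c_\alpha\in(0,1)$ and $U$ with weight $1-w$. The off-diagonal correlator is $-w/3=c_\alpha$ and the diagonal stays $1$, as required.

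The main care point is the correlator arithmetic and checking that $w$ stays in $[0,1]$ on $(1,2)$. Both follow because $c_\alpha\in(-1/3,0)$ there.
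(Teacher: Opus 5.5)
Your proof is correct and follows the paper's argument. Synchrony collapses each deterministic strategy to shared bits, and the zero-variance identity $F(p_\alpha,\alpha)=\mathbb E[(Z-\alpha)^2]=0$ then forces the integer $Z$ to equal $\alpha$. The correlators are reproduced by the same $(-3c_\alpha,\,1+3c_\alpha)$ mixture of balanced and uniform sign strings. Your Carath\'eodory remark is harmless but unnecessary, since the local set is already the convex hull of finitely many deterministic behaviors.
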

\begin{proof}
A local behavior is a convex combination of deterministic response
functions. Since $p_\alpha$ is synchronous, nonnegativity of the
weights forces each response pair appearing with positive weight
to agree on every equal input. Write the resulting shared bits as
$z_i\in\{0,1\}$, and put $N=\sum_i z_i$. The marginals and second
moments of $p_\alpha$ give $\mathbb E N=\alpha$ and
$\mathbb E N^2=\alpha^2$. Thus $\mathbb E(N-\alpha)^2=0$.
Since $N$ is integer-valued, this is impossible for $1<\alpha<2$.
In particular, every realizable point of the sequence in
Theorem~\ref{supp:thm:exact} is nonlocal.

For the second claim, direct substitution in
Eq.~\eqref{supp:eq:curve} gives diagonal correlators $1$ and
constant off-diagonal correlator
$c_\alpha=((\alpha-2)^2-1)/3\in(-1/3,0)$.
Let $\mu_{\mathrm{bal}}$ be the uniform measure on the six strings
$s\in\{-1,1\}^4$ with two plus signs, and $\mu_{\mathrm{ind}}$
the uniform measure on all sixteen strings. In both models the parties
share $s$ and output $s_i$ and $s_j$.
Every single-party mean is zero and every diagonal correlator is one.
For distinct indices the correlator is $-1/3$ under
$\mu_{\mathrm{bal}}$ and $0$ under $\mu_{\mathrm{ind}}$.
The convex mixture
\[
 (-3c_\alpha)\mu_{\mathrm{bal}}+(1+3c_\alpha)\mu_{\mathrm{ind}}
\]
therefore has exactly the desired correlators. Its probabilities differ
from $p_\alpha$ because its single-party means vanish, whereas those
of $p_\alpha$ equal $1-\alpha/2$. The correlator projection thus loses
the compatibility information responsible for exact membership.
\end{proof}

\begin{corollary}[Exact dimension growth at a classical limit]
\label{supp:cor:dimension}
Let $m\ge3$, $\alpha_m=2-2/m$, and $q_m=m/\gcd(m,2)$.
Every finite-dimensional realization of $p_{\alpha_m}$ requires
\[
 d_A\ge q_m,\qquad d_B\ge q_m.
\]
The same lower bounds hold on the quantum dimension allowed to each
party when arbitrary shared classical randomness is free.
If $D_m$ is the least $D$ permitting a realization with both local
dimensions at most $D$, and $D_m^{\mathrm{sr}}$ is the corresponding
quantity with free shared randomness, then
\begin{equation}\label{supp:eq:dimension-bounds}
 q_m\le D_m^{\mathrm{sr}}\le m,
 \qquad q_m\le D_m\le24m.
\end{equation}
In particular, both quantities are
$\Theta((2-\alpha_m)^{-1})$ along the sequence, although
$p_{\alpha_m}\to p_2$ and $p_2$ is classical local.
\end{corollary}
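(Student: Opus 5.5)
Two bounds have to be proved: the rank lower bound $d_A,d_B\ge q_m$, which must also survive free shared randomness, and the upper bounds $D_m^{\mathrm{sr}}\le m$ and $D_m\le 24m$.

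\textbf{Upper bounds.} The bound $D_m\le24m$ is read off from the realization in Theorem~\ref{supp:thm:exact}, which uses local dimension $d=24m$.

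For $D_m^{\mathrm{sr}}\le m$, I would replace the direct sum over $S_4$ by shared randomness. The parties share a uniformly random $\pi\in S_4$ and the maximally entangled state $\Phi_m$ on $\mathbb C^m\otimes\mathbb C^m$. Alice measures $P_{\pi(i)}$ and Bob measures $P_{\pi(j)}^{\mathsf T}$, where $P_1,\ldots,P_4$ come from Proposition~\ref{supp:prop:construction}. Averaging Eq.~\eqref{supp:eq:ME} over $\pi$ reproduces the same normalized-trace computations as Eqs.~\eqref{supp:eq:marginal}--\eqref{supp:eq:pair}. Hence this mixture realizes $p_{\alpha_m}$ with local dimensions $m$.

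\textbf{Lower bound, first reduction.} Take a realization with $\rho$ on $\mathbb C^{d_A}\otimes\mathbb C^{d_B}$ and POVMs $E_i^a,G_j^b$. With shared randomness, the behavior is a convex mixture of such realizations, each of local dimension at most $D$. Decompose $\rho$ further into pure states $\psi$. Since $p_{\alpha_m}(1,0\mid i,i)=p_{\alpha_m}(0,1\mid i,i)=0$ and all probabilities are nonnegative, every pure component with positive weight is itself synchronous.

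The delicate point is the quadratic condition $F=0$, which is not linear in $p$. I would handle it as in Proposition~\ref{supp:prop:local-correlators}. For each component, synchrony will make $\psi$ behave like a projective model (next paragraph), so $F(p_\psi,\alpha_m)=\norm{(S-\alpha_m)\psi}^2\ge0$. Each term of $F$ is linear in $p$ except the constant $\alpha_m^2$, and the mixture has total weight $1$. Therefore $\sum_k w_k F(p_k,\alpha_m)=F(p_{\alpha_m},\alpha_m)=0$, so $F(p_k,\alpha_m)=0$ for every component.

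\textbf{Lower bound, synchrony on a pure component.} Write the Schmidt decomposition $\psi=\sum_r\sqrt{\lambda_r}\,e_r\otimes f_r$ with $\lambda_r>0$. Let $\Pi_A,\Pi_B$ be the projections onto the Schmidt supports, of common rank $k\le\min(d_A,d_B)$, and identify both supports with $\mathbb C^k$. Put $X=\Lambda^{1/2}$, an invertible diagonal matrix.

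The condition $\ip{\psi}{E_i^1\otimes G_i^0\psi}=0$ reads $\Tr\bigl(X E^{\mathsf T} X\,(I-G)\bigr)=0$ in suitable compressions, and similarly with the roles exchanged. This is the standard synchronous-to-projection argument of Paulsen et al. or of Ref.~\cite{MPS}. From it I expect to get three facts. First, the compression $\Pi_AE_i^1\Pi_A$ is a projection $P_i$. Second, $E_i^1$ leaves the support invariant. Third, $(P_i\otimes I)\psi=(I\otimes Q_i)\psi$, where $Q_i$ is Bob's compressed projection.

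This step is the main obstacle. I would handle it with the operator inequality $0\le E\le I$: $\norm{(E^{1/2}(I-E)^{1/2}\otimes I)\psi}^2\le\ip{\psi}{E\otimes(I-G)\psi}+\ip{\psi}{(I-E)\otimes G\psi}$-type estimates, which give $\ip{\psi}{(E-E^2)\otimes I\,\psi}=0$. Since $\psi$ has full Schmidt rank on the supports, this forces $\Pi_A(E-E^2)\Pi_A=0$ and makes the support reducing for $E$.

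\textbf{Lower bound, trace argument.} With the projective form on $\mathbb C^k$, $F(p_\psi,\alpha_m)=0$ gives $(S-\alpha_m)\psi=0$ for $S=\sum_iP_i$. Because $\psi$ has full Schmidt rank, $(S-\alpha_m I)X=0$ with $X$ invertible, so $S=\alpha_m I_k$. Taking traces gives $\sum_i\operatorname{rank}P_i=\alpha_m k=\frac{2(m-1)}{m}k$, which is an integer. Since $\gcd(m-1,m)=1$, this forces $m/\gcd(m,2)=q_m$ to divide $k$. Hence $q_m\le k\le\min(d_A,d_B)$ for every component.

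Combining the upper and lower bounds yields Eq.~\eqref{supp:eq:dimension-bounds} and the stated $\Theta\bigl((2-\alpha_m)^{-1}\bigr)$ growth, since $2-\alpha_m=2/m$.
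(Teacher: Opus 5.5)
Your overall architecture matches the paper's proof: the bound $D_m\le 24m$ from Theorem~\ref{supp:thm:exact}, the bound $D_m^{\mathrm{sr}}\le m$ from $\Phi_m$ with a shared uniformly random $\pi\in S_4$, and the reduction to positive-weight pure components via nonnegative mismatch probabilities and affinity of $F$ are all as in the paper. So is the trace integrality $\alpha_m k=\sum_i\operatorname{rank}P_i$ forcing $q_m\mid k$.

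The gap is in the step you flagged as the main obstacle. From $\ip{\psi}{((E-E^2)\otimes I)\psi}=0$ and full Schmidt rank you correctly get $(E-E^2)|_{V_A}=0$. That does \emph{not} make the support $V_A$ reducing for $E$, and it does not make $\Pi_AE\Pi_A$ idempotent. It only says $V_A\subseteq\ker E\oplus\ker(I-E)$. For example, let $E$ be the projection onto $\mathbb C(e_1+e_2)$ and $V_A=\mathbb C e_1$. Then $E-E^2=0$, but $E V_A\not\subseteq V_A$ and $\Pi_AE\Pi_A=\tfrac12\Pi_A$.

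The same missing ingredient undermines your other steps. The formula $F(p_\psi,\alpha)=\norm{(S\otimes I-\alpha I)\psi}^2$ requires replacing Bob's effects by Alice's inside every second moment. You use it both for $F\ge0$ in the first reduction and for $(S-\alpha_m)\psi=0$. That replacement needs $(E_j\otimes I)\psi=(I\otimes G_j)\psi$, which is your ``third fact''; you say you expect it but never derive it.

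The fix is to keep the entire defect of your inequality instead of discarding it. With $X_i=E_i\otimes I$ and $Y_i=I\otimes G_i$ one has the exact identity
\[
 r(1,0\mid i,i)+r(0,1\mid i,i)=\norm{(X_i-Y_i)\psi}^2+\ip{\psi}{(X_i-X_i^2)\psi}+\ip{\psi}{(Y_i-Y_i^2)\psi}.
\]
All three terms are nonnegative, so synchrony makes each vanish. In particular $X_i\psi=Y_i\psi$, and the right side lies in $V_A\otimes\mathcal H_B$. Projecting onto $V_A^\perp\otimes\mathcal H_B$ gives $\sum_k\sqrt{\lambda_k}(\Pi_{V_A^\perp}E_iu_k)\otimes v_k=0$. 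Orthonormality of the $v_k$ then yields $E_iV_A\subseteq V_A$, and selfadjointness makes $V_A$ reducing.

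Only after this does $(E_i-E_i^2)|_{V_A}=0$ show that $P_i=E_i|_{V_A}$ is a projection. The moment computation of Lemma~\ref{supp:lem:scalar} then goes through, giving $F(p_\psi,\alpha)=\norm{(S_A\otimes I-\alpha I)\psi}^2$. This is exactly the paper's argument; with it inserted, the rest of your proof is correct.
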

\begin{proof}
We use the moment-to-scalar mechanism of Ref.~\cite{MPS} to prove a
rank constraint for any synchronous behavior $r$
with $F(r,\alpha)=0$, where $F$ is defined in
Eq.~\eqref{supp:eq:witness}.
Consider first a pure-state realization $\psi$ by binary POVM effects
$E_i$ and $G_i$, and put $X_i=E_i\otimes\id$ and
$Y_i=\id\otimes G_i$.
Synchrony and cross-party commutation give
\begin{align*}
 0&=r(1,0\mid i,i)+r(0,1\mid i,i)\\
  &=\norm{(X_i-Y_i)\psi}^2
    +\ip{\psi}{(X_i-X_i^2+Y_i-Y_i^2)\psi}.
\end{align*}
Each term is nonnegative because $E_i$ and $G_i$ are positive
contractions. Hence $X_i\psi=Y_i\psi$ and
$(X_i-X_i^2)\psi=(Y_i-Y_i^2)\psi=0$.

Write the Schmidt decomposition
$\psi=\sum_{k=1}^{s}\sqrt{\lambda_k}\,u_k\otimes v_k$,
with every $\lambda_k>0$, and let $V_A=\operatorname{span}\{u_k\}$
and $V_B=\operatorname{span}\{v_k\}$.
The relation $X_i\psi=Y_i\psi$ makes $V_A$ invariant under $E_i$:
projecting it onto $V_A^\perp\otimes\mathcal H_B$ gives
$\sum_k\sqrt{\lambda_k}(\Pi_{V_A^\perp}E_i u_k)\otimes v_k=0$,
so every coefficient vanishes. The analogous argument gives
invariance of $V_B$ under $G_i$.
Thus the restrictions $P_i=E_i|_{V_A}$ and $Q_i=G_i|_{V_B}$
are selfadjoint. The preceding identities for $X_i-X_i^2$ and
$Y_i-Y_i^2$, together with the full Schmidt rank on these supports,
show that $P_i^2=P_i$ and $Q_i^2=Q_i$.
In particular no increase in local dimension is required to obtain
projections on the Schmidt supports.

For $S_A=\sum_iP_i$, the same moment calculation as in
Lemma~\ref{supp:lem:scalar} now gives
\[
 F(r,\alpha)=\norm{(S_A\otimes\id-\alpha\id)\psi}^2\ge0.
\]
If this expression is zero, full Schmidt rank implies
$S_A=\alpha\id_{V_A}$. Summing $X_i\psi=Y_i\psi$ gives likewise
$\sum_iQ_i=\alpha\id_{V_B}$.
Taking the ordinary matrix trace on either $s$-dimensional support
therefore yields
\begin{equation}\label{supp:eq:rank-integrality}
 \alpha s=\sum_i\operatorname{rank}P_i\in\mathbb Z.
\end{equation}
For $\alpha=\alpha_m$, its denominator in lowest terms is $q_m$;
thus $q_m$ divides $s$, and both local dimensions are at least $q_m$.

These conclusions extend to mixed states and shared randomness.
In a finite-dimensional mixed-state strategy, decompose the state
into pure states of positive weight. Since all mismatch probabilities
are nonnegative, synchrony of the mixture forces synchrony of each
pure component. The preceding argument shows $F(r,\alpha)\ge0$
for each such component. For fixed $\alpha$, $F$ is affine in $r$;
hence $F=0$ for the mixture forces $F=0$ in every positive-weight
component. Each component then obeys
Eq.~\eqref{supp:eq:rank-integrality} within the original local spaces.
Consequently the same statements hold for every mixed-state strategy.

For arbitrary shared randomness, write the observed behavior as
$r=\int r_\lambda\,d\mu(\lambda)$, where each conditional strategy
has the prescribed local dimension bounds. There are finitely many
mismatch probabilities. Their nonnegativity and zero integrals imply
that $r_\lambda$ is synchronous for almost every $\lambda$.
For these branches $F(r_\lambda,\alpha)\ge0$, and affinity gives
$\int F(r_\lambda,\alpha)\,d\mu(\lambda)=F(r,\alpha)=0$.
Thus $F(r_\lambda,\alpha)=0$ almost surely, so almost every branch
requires both local dimensions to be at least $q_m$.
The behavior $p_{\alpha_m}$ has synchrony and $F=0$ by
Theorem~\ref{supp:thm:exact}, establishing both lower bounds.

The construction in Theorem~\ref{supp:thm:exact} uses local dimension
$24m$, proving the upper bound for $D_m$.
If shared randomness is free, implement the same symmetrization by
choosing $\pi\in S_4$ uniformly and measuring
$P_{\pi(i)}$ and $P_{\pi(j)}^{\mathsf T}$ on $\Phi_m$.
Averaging these dimension-$m$ strategies produces exactly the same
probabilities as the block-direct-sum construction. Hence
$D_m^{\mathrm{sr}}\le m$.
Finally $m/2\le q_m\le m$ and $m=2/(2-\alpha_m)$ give the claimed
asymptotic bounds. They concern exact realization; no lower bound at
a fixed nonzero noise tolerance is asserted.
\end{proof}

\begin{corollary}[Threshold within the synchronous binary sector]
\label{supp:cor:synchronous-threshold}
Let $C_t^{\mathrm{sync}}(n,2)$ denote the synchronous sector with
$n$ inputs per party and two outputs, for $t\in\{q,qa,qc\}$.
These sets are semialgebraic for $n\le3$ and nonsemialgebraic for
$n\ge4$.
\end{corollary}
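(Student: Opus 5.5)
The statement has two halves, and I would treat them separately: nonsemialgebraicity for $n\ge4$ by reduction to Theorem~\ref{supp:thm:exact}, and semialgebraicity for $n\le3$ by a bounded-dimension realization together with Tarski--Seidenberg.

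\emph{The case $n\ge4$.} I would embed the four-input scenario into the $n$-input synchronous scenario by an affine injection $\iota\colon\mathbb R^{64}\to\mathbb R^{4n^2}$. Inputs $5,\dots,n$ of each party duplicate input $4$: set $p(a,b\mid i,j)=p(a,b\mid \min(i,4),\min(j,4))$. For the synchronous sector, the duplicated diagonal entries $(i,i)$ with $i>4$ copy $(4,4)$ and so stay synchronous.

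The embedding should satisfy $\iota(p)\in C_t^{\mathrm{sync}}(n,2)$ iff $p\in C_t^{\mathrm{sync}}(4,2)$. The forward direction reuses the same measurement operators for the duplicated inputs. The reverse direction restricts to the first four inputs.

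Then $\iota\circ p_\alpha$ is again a polynomial curve. Since $p_\alpha$ is synchronous, its intersection with $C_t^{\mathrm{sync}}(n,2)$ is exactly the sequence in Theorem~\ref{supp:thm:exact}. The univariate argument of Corollary~\ref{supp:cor:nonsemi} then gives nonsemialgebraicity. Intersecting with the polyhedral synchrony conditions would preserve semialgebraicity, so the same argument also applies if one prefers to view the sector as a subset of the full set.

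\emph{The case $n\le3$.} I would invoke Russell's results \cite{RussellGeom,RussellConnes}, which the paper has already stated. They give that every synchronous binary correlation with three inputs in $\Cqc$ is realized by a convex combination of realizations in uniformly bounded dimension $D$, and that the $q$, $qa$, $qc$ sectors coincide.

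Given this, the sector is the convex hull of the image of a polynomial map on a compact semialgebraic parameter space. The parameters are projections $P_1,P_2,P_3$ on $\mathbb C^{D}$ with $p(1,1\mid i,j)=\tau(P_iP_j)$. By Tarski--Seidenberg this image is semialgebraic. By Carathéodory, its convex hull is the projection of finitely many copies of the image together with simplex weights, hence also semialgebraic. Smaller $n$ follows by restriction to coordinates, or by the same argument.

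\emph{Main obstacle.} The hard step is making sure the cited bounded-dimension theorem really covers the full behavior, including marginals and the $qc$ model, rather than only correlators. If it does not, my fallback is to prove it directly. By the synchronous reduction (Lemma~\ref{supp:lem:scalar}-style), a $qc$ synchronous correlation comes from a tracial state on the algebra generated by three projections. I would then decompose via known structure: irreducible representations of the free product of three $\mathbb Z_2$'s that carry traces are finite-dimensional of bounded size. This is exactly where Russell's analysis enters, so I would rely on the citation and check only that marginals are included.
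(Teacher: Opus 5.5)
Your argument is correct and essentially the paper's. For $n\ge4$ you duplicate input $4$ and restrict the synchronous curve $p_\alpha$ to one variable. For $n\le3$ you combine Russell's bounded-dimension realization and $C_q^{\mathrm{sync}}(3,2)=C_{qc}^{\mathrm{sync}}(3,2)$ with Tarski--Seidenberg. Your tracial-moment parametrization with a Carath\'eodory convex hull is only a cosmetic variant of the paper's fixed-dimension parametrization cut down by the linear synchrony conditions. Take the union over all dimensions $d\le D$, since $\Tr/D$ on $\mathbb C^D$ misses realizations in dimensions not dividing $D$.

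Do discard your stated fallback, though, because its key claim is false. Tracial irreducible representations of the algebra generated by three projections are not of bounded dimension: $S_k$ is generated by three involutions, so each of its irreducible representations gives one. A direct proof would therefore have to exploit that a synchronous binary behavior depends only on the moments $\tau(P_i)$ and $\tau(P_iP_j)$, not on any bound on irreducibles.
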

\begin{proof}
Russell gives a uniform finite-dimensional realization of every
point in $C_q^{\mathrm{sync}}(3,2)$~\cite{RussellGeom}, and establishes
$C_q^{\mathrm{sync}}(3,2)=C_{qc}^{\mathrm{sync}}(3,2)$~\cite{RussellConnes}.
The intermediate closed model is the same set. Fixed-dimensional
matrix parametrization, with the linear synchrony conditions,
therefore proves semialgebraicity. Restricting inputs gives the
$n<3$ cases; conversely extra inputs can be duplicated, so these
restrictions are onto.
For $n=4$, every $p_\alpha$ in Theorem~\ref{supp:thm:exact} is
synchronous. Its discrete inverse image thus rules out semialgebraicity
of the synchronous sector itself. Duplicating inputs embeds this
example into any $n>4$, as formalized below. This threshold concerns
synchronous behaviors with equal numbers of binary inputs on the
two sides. It does not establish the smallest nonsynchronous Bell
scenario with a nonsemialgebraic quantum set.
\end{proof}

\begin{corollary}[Inheritance by larger scenarios]
\label{supp:cor:larger-scenarios}
For every fixed $n_A,n_B\ge4$ and $m_A,m_B\ge2$, each of the
three corresponding quantum correlation sets is nonsemialgebraic.
\end{corollary}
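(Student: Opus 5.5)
The plan is to embed the four-input, binary-output curve $p_\alpha$ into the larger scenario by an affine (in fact coordinate-linear) map $\Phi:\mathbb R^{64}\to\mathbb R^{N}$, where $N=n_An_Bm_Am_B$. The embedded curve $\Phi(p_\alpha)$ should meet each of the three larger sets in exactly the parameters $\{2-2/m:m\ge3\}$. Once that holds, the argument of Corollary~\ref{supp:cor:nonsemi} applies verbatim: if a larger set were semialgebraic, the preimage $\{\alpha\in(1,2):\Phi(p_\alpha)\in C'_t\}$ would be an infinite countable semialgebraic subset of the line, which is impossible.

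\emph{Construction of $\Phi$.} Alice's inputs $i\in\{1,\ldots,n_A\}$ are sent to $\sigma(i)=\min(i,4)$, so inputs $i\ge4$ duplicate measurement $4$; Bob's inputs are treated the same way. Outcomes are kept in $\{0,1\}$, and every outcome $a\ge2$ is assigned probability zero. Concretely,
\[
\Phi(p)(a,b\mid i,j)=p(a,b\mid\sigma(i),\sigma(j))\quad\text{for }a,b\in\{0,1\},
\]
and $\Phi(p)(a,b\mid i,j)=0$ otherwise. Each coordinate is a coordinate of $p$ or zero, so $\Phi(p_\alpha)$ is still a polynomial curve.

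\emph{Forward inclusion.} If $p_\alpha\in\Ct(4,4;2,2)$, then $\Phi(p_\alpha)$ lies in the larger set of type $t$. To realize it, reuse the same state. The measurement for input $i$ is taken to be the measurement for $\sigma(i)$, padded with zero effects (or zero projections, in the commuting model) for outcomes $\ge2$. These padded POVMs remain valid and cross-party commutation is preserved. For $t=qa$, apply $\Phi$ to approximating sequences and use continuity. Theorem~\ref{supp:thm:exact} then places every $2-2/m$ in the preimage.

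\emph{Reverse inclusion.} If $\Phi(p_\alpha)$ lies in the larger set of type $t$, then $p_\alpha\in\Ct(4,4;2,2)$. Take any realization of $\Phi(p_\alpha)$ and restrict it to inputs $1,\ldots,4$, coarse-graining by merging every outcome $a\ge1$ into outcome $1$. That is, set $E_i^1=\sum_{a\ge1}E_i^a$, and similarly for projections in the commuting case, where sums of mutually orthogonal projections from a single PVM remain projections. Coarse-graining maps each model into its binary counterpart and is continuous, so it also handles closures. Because the outcomes $\ge2$ carry zero probability in $\Phi(p_\alpha)$, the coarse-grained behavior is exactly $p_\alpha$. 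Theorem~\ref{supp:thm:exact} then excludes every other $\alpha\in(1,2)$.

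\emph{Expected difficulty.} The main obstacle is model bookkeeping in the commuting-operator setting with $m_A>2$ outcomes. There I would define $\Cqc$ with PVMs $\{A_i^a\}_a$ commuting across parties, and then verify two points: zero padding preserves this structure, and coarse-graining produces projections of the binary form used in Lemma~\ref{supp:lem:scalar}. If the larger commuting set is instead defined with POVMs, the dilation of the Remark after Lemma~\ref{supp:lem:limit} converts each coarse-grained binary POVM into projections without changing the probabilities.
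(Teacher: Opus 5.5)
Your proposal is correct and follows essentially the paper's argument. You embed by duplicating inputs and assigning zero probability to the added outcomes, then recover the binary four-input behavior by restricting inputs and coarse-graining, so that an embedded point lies in the larger set exactly when the original lies in $\Ct$; merging the extra outcomes into $1$ rather than into $0$ as the paper does is immaterial, since they carry zero probability. The paper proves this equivalence for all behaviors and pulls back a putative semialgebraic description along the linear embedding to contradict Corollary~\ref{supp:cor:nonsemi}, whereas you compose directly with the curve $p_\alpha$ and rerun the one-variable argument, which amounts to the same thing.
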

\begin{proof}
Define a linear embedding $E$ of four-input binary behaviors by
repeating an existing measurement for each added input and assigning
zero probability to added outcomes. It preserves realizability in
all three models. In the reverse direction, restrict to the original
inputs and merge added outcomes with outcome $0$. This linear map
$R$ preserves realizability, is continuous, and satisfies $REp=p$.
Continuity handles the closed finite-dimensional model, while
repeating or merging measurements handles the other two models.
Thus $Ep$ belongs to the larger quantum set if and only if $p$
belongs to its four-input binary counterpart. The inverse image
under $E$ of a semialgebraic set would be semialgebraic, contradicting
Corollary~\ref{supp:cor:nonsemi}. Duplicating inputs also preserves
synchrony, giving the final step of
Corollary~\ref{supp:cor:synchronous-threshold}.
\end{proof}


\begin{thebibliography}{99}

\bibitem{Bell}
J.~S.~Bell, On the Einstein Podolsky Rosen paradox,
\href{https://doi.org/10.1103/PhysicsPhysiqueFizika.1.195}{Physics Physique Fizika \textbf{1}, 195 (1964)}.

\bibitem{Brunner}
N.~Brunner, D.~Cavalcanti, S.~Pironio, V.~Scarani, and S.~Wehner,
Bell nonlocality,
\href{https://doi.org/10.1103/RevModPhys.86.419}{Rev. Mod. Phys. \textbf{86}, 419 (2014)}.

\bibitem{Goh}
K.~T.~Goh, J.~Kaniewski, E.~Wolfe, T.~V\'ertesi, X.~Wu, Y.~Cai,
Y.-C.~Liang, and V.~Scarani, Geometry of the set of quantum correlations,
\href{https://doi.org/10.1103/PhysRevA.97.022104}{Phys. Rev. A \textbf{97}, 022104 (2018)}.

\bibitem{NPA2007}
M.~Navascu\'es, S.~Pironio, and A.~Ac\'in,
Bounding the set of quantum correlations,
\href{https://doi.org/10.1103/PhysRevLett.98.010401}{Phys. Rev. Lett. \textbf{98}, 010401 (2007)}.

\bibitem{NPA}
M.~Navascu\'es, S.~Pironio, and A.~Ac\'in,
A convergent hierarchy of semidefinite programs characterizing the set of quantum correlations,
\href{https://doi.org/10.1088/1367-2630/10/7/073013}{New J. Phys. \textbf{10}, 073013 (2008)}.

\bibitem{Tsirelson87}
B.~S.~Tsirel'son, Quantum analogues of the Bell inequalities.
The case of two spatially separated domains,
\href{https://doi.org/10.1007/BF01663472}{J. Sov. Math. \textbf{36}, 557 (1987)}.

\bibitem{Masanes}
Ll.~Masanes, Extremal quantum correlations for $N$ parties with two dichotomic observables per site,
\href{https://arxiv.org/abs/quant-ph/0512100}{arXiv:quant-ph/0512100}.

\bibitem{Le}
T.~P.~Le, C.~Meroni, B.~Sturmfels, R.~F.~Werner, and T.~Ziegler,
Quantum correlations in the minimal scenario,
\href{https://doi.org/10.22331/q-2023-03-16-947}{Quantum \textbf{7}, 947 (2023)}.

\bibitem{BCR}
J.~Bochnak, M.~Coste, and M.-F.~Roy,
\href{https://doi.org/10.1007/978-3-662-03718-8}{\emph{Real Algebraic Geometry}}
(Springer, Berlin, 1998).

\bibitem{Tsirelson}
B.~S.~Tsirelson, Some results and problems on quantum Bell-type inequalities,
\href{https://ma.huji.ac.il/~ohadfeld/Tsirelson/download/hadron.pdf}{Hadronic J. Suppl. \textbf{8}, 329 (1993)}, Problem~2.10.

\bibitem{Ozawa2013}
N.~Ozawa, About the Connes embedding conjecture---algebraic approaches,
\href{https://doi.org/10.1007/s11537-013-1280-5}{Jpn. J. Math. \textbf{8}, 147 (2013)}, Sec.~20.

\bibitem{Slofstra}
W.~Slofstra, The set of quantum correlations is not closed,
\href{https://doi.org/10.1017/fmp.2018.3}{Forum Math. Pi \textbf{7}, e1 (2019)}.

\bibitem{DPP}
K.~Dykema, V.~I.~Paulsen, and J.~Prakash,
Non-closure of the set of quantum correlations via graphs,
\href{https://doi.org/10.1007/s00220-019-03301-1}{Commun. Math. Phys. \textbf{365}, 1125 (2019)}.

\bibitem{MIPRE}
Z.~Ji, A.~Natarajan, T.~Vidick, J.~Wright, and H.~Yuen,
$\mathrm{MIP}^{*}=\mathrm{RE}$,
\href{https://arxiv.org/abs/2001.04383}{arXiv:2001.04383}.

\bibitem{FMS}
H.~Fu, C.~A.~Miller, and W.~Slofstra,
The membership problem for constant-sized quantum correlations is undecidable,
\href{https://doi.org/10.1007/s00220-024-05229-7}{Commun. Math. Phys. \textbf{406}, 96 (2025)}.

\bibitem{PNA}
S.~Pironio, M.~Navascu\'es, and A.~Ac\'in,
Convergent relaxations of polynomial optimization problems with noncommuting variables,
\href{https://doi.org/10.1137/090760155}{SIAM J. Optim. \textbf{20}, 2157 (2010)}.

\bibitem{PSSTW}
V.~I.~Paulsen, S.~Severini, D.~Stahlke, I.~G.~Todorov, and A.~Winter,
Estimating quantum chromatic numbers,
\href{https://doi.org/10.1016/j.jfa.2016.01.010}{J. Funct. Anal. \textbf{270}, 2188 (2016)}.

\bibitem{KRS}
S.~A.~Kruglyak, V.~I.~Rabanovich, and Yu.~S.~Samoilenko,
On sums of projections,
\href{https://doi.org/10.1023/A:1020193804109}{Funct. Anal. Appl. \textbf{36}, 182 (2002)}.

\bibitem{MPS}
L.~Man\v{c}inska, J.~Prakash, and C.~Schafhauser,
Constant-sized robust self-tests for states and measurements of unbounded dimension,
\href{https://doi.org/10.1007/s00220-024-05122-3}{Commun. Math. Phys. \textbf{405}, 221 (2024)}.

\bibitem{SM}
See the \hyperref[supplement-start]{Supplemental Material appended below}
for full proofs, dimension bounds, and the comparison with three-input
synchronous correlations.

\bibitem{NV}
M.~Navascu\'es and T.~V\'ertesi, Bounding the set of finite dimensional quantum correlations,
\href{https://doi.org/10.1103/PhysRevLett.115.020501}{Phys. Rev. Lett. \textbf{115}, 020501 (2015)}.

\bibitem{RussellGeom}
T.~B.~Russell, Geometry of the set of synchronous quantum correlations,
\href{https://doi.org/10.1063/1.5115010}{J. Math. Phys. \textbf{61}, 052201 (2020)}.

\bibitem{RussellConnes}
T.~B.~Russell, Two-outcome synchronous correlation sets and Connes' embedding problem,
\href{https://doi.org/10.26421/QIC20.5-6-1}{Quantum Inf. Comput. \textbf{20}, 361 (2020)}.

\bibitem{AlmostQuantum}
M.~Navascu\'es, Y.~Guryanova, M.~J.~Hoban, and A.~Ac\'in,
Almost quantum correlations,
\href{https://doi.org/10.1038/ncomms7288}{Nat. Commun. \textbf{6}, 6288 (2015)}.

\end{thebibliography}
\end{document}